\DeclareMathOperator*{\argmin}{argmin}
\DeclareMathOperator*{\argmax}{argmax}
\newtheorem{theorem}{Theorem}[section]
\newtheorem{proposition}[theorem]{Proposition}
\newenvironment{proof}[1][Proof]{\begin{trivlist}
\item[\hskip \labelsep {\bfseries #1}]}{\end{trivlist}}
\newenvironment{remark}[1][Remark:]{\begin{trivlist}
\item[\hskip \labelsep {\it\bfseries #1}]}{\end{trivlist}}
\begin{document}

\title{Efficient Decoding Algorithms for the Compute-and-Forward Strategy}

\author{Asma~Mejri,~\IEEEmembership{Student Member,~IEEE,}
        Ghaya~Rekaya,~\IEEEmembership{Member,~IEEE}
\thanks{A. Mejri and G. Rekaya are with the Communications and Electronics Department
of Telecom-ParisTech, Paris, 75013, France. Emails: amejri,rekaya@telecom-paristech.fr.}}

% The paper headers
\markboth{IEEE Transactions on Communications}%
{Submitted paper}

\maketitle

\begin{abstract}
 We address in this paper decoding aspects of the Compute-and-Forward (CF) physical-layer network coding strategy. It is known that the original decoder for the CF is asymptotically optimal. However, its performance gap to optimal decoders in practical settings are still not known. In this work, we develop and assess the performance of novel decoding algorithms for the CF operating in the multiple access channel. For the fading channel, we analyze the ML decoder and develop a novel diophantine approximation-based decoding algorithm showed numerically to outperform the original CF decoder. For the Gaussian channel, we investigate the \textit{maximum a posteriori} (MAP) decoder. We derive a novel MAP decoding metric and develop practical decoding algorithms proved numerically to outperform the original one.
 %Initial studies on the Compute-and-Forward physical-layer network coding strategy showed its promising theoretical performance. One important question is whether this performance is attainable using low-complexity encoding and decoding algorithms.
\end{abstract}

% Note that keywords are not normally used for peerreview papers.
\begin{IEEEkeywords}
Physical-Layer Network Coding, Compute-and-Forward, Lattice decoding, maximum a posteriori decoding.
\end{IEEEkeywords}

\IEEEpeerreviewmaketitle

\section{Introduction}

\IEEEPARstart{L}{ast} few years have witnessed the emergence of a very promising linear physical-layer network coding protocol termed \textit{Compute-and-Forward}. Introduced by Nazer and Gastpar in \cite{Nazer08}, this scheme allows to harness the multiple access interference to achieve higher transmission rates. This new framework is applicable to any network configuration accomodating source nodes, relays and destinations that communicate through linear additive white Gaussian noise channels. The role of a relay node observing the output of a multiple access channel is to decode a \textit{linear integer} combination of source codewords. Given enough linear equations, the end destination in the network can ideally recover the original source messages with high transmission rates thanks to the potential properties of nested lattice codes. 

The original decoder for the CF consists of a scaling operation and a minimum distance decoding. Under these assumptions, a union bound estimate of the error probability at the relays was derived in \cite{Silva_isit10}, and we have addressed in \cite{Asma1} and \cite{Asma2} the end-to-end error performance evaluation in the multi-source relay channel and the two-way relay channel respectively. Later on, independently in \cite{AsmaGhaya} and \cite{Belfiore12}, the Maximum Likelihood (ML) decoder was investigated. 
 An algebraic extension of the CF using lattice partitions related to  finitely generated modules over principal ideal domains was proposed by Feng \textit{et al.} in \cite{Silva_isit10} assuming also minimum distance decoding. These works focus on the information theoretic performance of the CF and demonstrate that minimum distance decoding is asymptotically optimal. Nevertheless, the performance of this decoder in practical settings (finite lattice dimensions and low-complexity encoding schemes) and the gap to optimal decoders are still not known. We try in this work to find answers to these issues by investigating optimal decoding criteria for the CF in the basic multiple access channel. After reviewing the original CF encoding and decoding schemes in section II, our contributions come into light as follows: in section III, we investigate efficient ML decoding for the fading channel considering integer lattices. By analyzing the one-dimensional case, we develop a novel near-ML decoder based on diophantine approximation and show by numerical results its gain over the original CF decoder for $\mathbb{Z}-$lattices. In section IV, we analyze the MAP decoder for the CF in the Gaussian channel using real-valued lattices. We derive a novel union bound estimate on the error probability and propose a lattice design criterion. Then, we derive a novel MAP decoding metric based on which we develop novel efficient decoding algorithms proved numerically to outperform the conventional CF decoder while maintaining the same complexity order. 

\section{Compute-and-Forward in Basic MAC: Original Work}

\subsection{Preliminaries on Lattice Coding}
An $n$-dimensional \textit{\textbf{lattice}} $\Lambda$ is a set of points of $\mathbb{R}^{n}$ given by $\Lambda=\left\lbrace \mathbf{x}= \mathbf{M} \mathbf{s} , \mathbf{s} \in \mathbb{Z}^{n} \right\rbrace$ where $\mathbf{M}$ is called a generator matrix of the lattice. The main characteristic of $\Lambda$ is \textit{linearity}, i.e. for any $a, b \in \mathbb{Z}$ and $\mathbf{x}, \mathbf{y} \in \Lambda$, $a\mathbf{x}+b\mathbf{y} \in \Lambda$. 

A \textit{\textbf{lattice quantizer}} $\mathit{Q}_{\Lambda}$ satisfies for $\mathbf{x} \in \mathbb{R}^{n}$, $\mathit{Q}_{\Lambda}(\mathbf{x}) = \argmin_{\lambda \in \Lambda} \parallel \mathbf{x} - \lambda \parallel$. The set of points that quantize to a given lattice point is called the \textit{\textbf{Voronoi Region}}. The \textit{\textbf{fundamental Voronoi Region}} $\mathcal{V}_{\Lambda}$ of a lattice $\Lambda$ corresponds to the voronoi region of the zero vector. The modulo operation returns the quantization error with respect to $\Lambda$. For $\mathbf{x} \in \mathbb{R}^{n}$: $\left[ \mathbf{x} \right]\mathrm{mod}\Lambda = \mathbf{x} -  \mathit{Q}_{\Lambda} \left( \mathbf{x} \right)$. 
 
 A \textit{\textbf{nested lattice code}} $\mathcal{C}_{\Lambda}$ is the set of all points of a lattice $\Lambda_{\mathrm{F}}$ (termed the \textit{Fine} lattice) that fall within the fundamental Voronoi region of a lattice $\Lambda_{\mathrm{C}}$ (termed the \textit{Coarse} lattice) as: $\mathcal{C}_{\Lambda}= \left\lbrace \mathbf{\lambda} = \left[ \lambda_{\mathrm{F}} \right]\mathrm{mod}\Lambda_{\mathrm{C}}, \lambda_{\mathrm{F}} \in \Lambda_{\mathrm{F}}\right\rbrace$.
 
\subsection{System Model and Assumptions}
We consider the real-valued fading Multiple Access Channel (MAC) composed of $N$ sources $\mathrm{S}_i,i=1,...,N$ and a common receiver. Extension of our results to the complex-valued channel follows by considering the real and imaginary parts of the channel outputs separately. Source $\mathrm{S}_i$  delivers a length-$k$ finite field message $\mathbf{w}_i \in \mathbb{F}_{p}^{k}$
drawn independently and uniformally. Encoders $\mathcal{E}$ at the sources implement the same mapping $\phi$ to map the messages $\mathbf{w}_i$ onto codewords $\mathbf{x}_i$ from the same nested lattice code $\mathcal{C}_{\Lambda}$ according to the symmetric power constraint given by:
\begin{equation}\label{power}
\frac{1}{n} \mathbb{E} \left( \parallel \mathbf{x}_{i} \parallel^{2} \right) \leq P ~,~P > 0
\end{equation}
$\Lambda_{\mathrm{F}}$ corresponds to the coding lattice and $\Lambda_{\mathrm{C}}$ acts to satisfy the power constraint $P$. The codewords are assumed to be independent and uniformally distributed over $\mathcal{C}_{\Lambda}$. The message rate is equal to $r=\frac{k}{n} \log{p}$ and is the same for all sources. Sources transmit then their codewords simultaneously across the channel. The received vector is written as:
\begin{equation}
\mathbf{y} = \sum_{i=1}^{N} h_{i} \mathbf{x}_i + \mathbf{z}
\end{equation}
where $h_i \in \mathbb{R}$ denotes the fading coefficient from source $S_i$ to the receiver and $\mathbf{z} \in \mathbb{R}^{n}$ denotes the additive white Gaussian noise of zero-mean and variance $\sigma^{2}$. Let $\mathbf{h}=\left[h_1,...,h_N \right]^{t}$ denote the channel coefficients vector. In this work we assume fixed channel vector. Nevertheless, our results hold for the fast fading and slow fading channels. We assume also that channel state information (CSI) is available only at the receiver and denote by $\rho = \frac{P}{\sigma^2}$ the signal-to-noise ratio (SNR).

\subsection{Decoding Scheme for the Compute-and-Forward}
The receiver attempts to decode a noiseless integer linear combination in the form:
\begin{equation}
\lambda = \left[ \sum_{i=1}^{N} a_i \mathbf{x}_i \right] \mathrm{mod}~\Lambda_{\mathrm{C}} ~,~a_i \in \mathbb{Z}, i=1,...,N 
\end{equation}
Where the network code vector $\mathbf{a}=[a_1,...,a_N]^{t} \in \mathbb{Z}^{N}$ is chosen by the receiver. 
The latter is equipped with a decoder $\mathcal{D} : \mathbb{R}^{n} \rightarrow \Lambda$, that recovers an estimate $\hat{\lambda}$ of $\lambda$. A decoding error occurs if $\hat{\lambda} \neq \lambda$ and the desired equation with a coefficient vector $\mathbf{a}$ is decoded with an average probability of error $\epsilon$ if $\hat{\lambda} \stackrel{\triangle}{=} \mathcal{D}\left( \mathbf{y} \right)$ and $\mathrm{Pr} \left( \hat{\lambda} \neq \lambda \right) < \epsilon$. 
A computation rate $\mathcal{R}(\mathbf{h},\mathbf{a})$ is said to be achievable if for any $\epsilon > 0$ and $n$ large enough, there exist an encoder $\mathcal{E}$ and a decoder $\mathcal{D}$, such that for any channel fading vector $\mathbf{h} \in \mathbb{R}^{N}$ and network code vector $\mathbf{a} \in \mathbb{Z}^{N}$, the receiver can recover the desired equation with an average probability of error $\epsilon$ as long as the source message rate $r$ satisfies: $r < \mathcal{R}\left( \mathbf{h},\mathbf{a} \right)$. 

The receiver selects a scalar $\alpha \in \mathbb{R}$ and an integer vector $\mathbf{a}$ and performs the following steps:
\begin{enumerate}
\item Scale the channel output: $\tilde{\mathbf{y}}= \alpha \mathbf{y}=\sum_{i=1}^{N} a_i \mathbf{x}_i + \sum_{i=1}^{N} \left( \alpha h_i-a_i \right) \mathbf{x}_i +\alpha \mathbf{z}$. The resulting effective noise $\mathbf{z}_{\mathrm{eq}}=\sum_{i=1}^{N} \left( \alpha h_i-a_i \right) \mathbf{x}_i +\alpha \mathbf{z}$ is not Gaussian since composed of a quantization error involving the original codewords. At this level, $\mathbf{t}=\sum_{i=1}^{N} a_i \mathbf{x}_i \in \Lambda_{\mathrm{F}}$. 
 \item Decode to the nearest point in the fine lattice: $\hat{\mathbf{t}} = \mathnormal{Q}_{\Lambda_{\mathrm{F}}}\left(\tilde{\mathbf{y}} \right)$.
 \item Take the modulo operation with respect to the coarse lattice: $\hat{\lambda}= \left[ \hat{\mathbf{t}}\right]\mathrm{mod}~\Lambda_{\mathrm{C}}$.
\end{enumerate}
We summarize the results regarding the CF protocol in the following theorems \cite{Nazer08}.
\begin{theorem}[Computation rate]\label{Rcomp}
For a real-valued MAC with channel vector $\mathbf{h}$, and network code vector $\mathbf{a} \in \mathbb{Z} ^{N}$ the following computation rate $\mathnormal{R}_{\mathrm{comp}}$, for $\alpha \in \mathbb{R}$, is achievable:
\begin{align}\label{rateCF}
R_{\mathrm{comp}}(\mathbf{h},\mathbf{a} ) &= \frac{1}{2} \log ^{+}{ \left( \frac{\rho}{\alpha ^2 + \rho \parallel \alpha \mathbf{h} - \mathbf{a} \parallel ^{2} } \right)}
\end{align}
where $\log^{+}(x)=max(\log(x),0)$.
\end{theorem}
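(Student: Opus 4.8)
The plan is to follow the nested lattice coding argument of Nazer and Gastpar \cite{Nazer08}. First I would fix a pair of nested lattices $\Lambda_{\mathrm{C}} \subseteq \Lambda_{\mathrm{F}}$ drawn from a good ensemble (e.g. the Construction-A / Loeliger ensemble) so that $\Lambda_{\mathrm{F}}$ is simultaneously good for coding over the AWGN channel (its Voronoi region behaves, in the Poltyrev sense, like a Euclidean ball of the same volume) and $\Lambda_{\mathrm{C}}$ is good for quantization with per-dimension second moment equal to $P$. To decouple the transmitted codewords from the messages, introduce independent dithers $\mathbf{d}_i$ uniform over $\mathcal{V}_{\Lambda_{\mathrm{C}}}$ and let source $\mathrm{S}_i$ transmit $\mathbf{x}_i=\left[\phi(\mathbf{w}_i)+\mathbf{d}_i\right]\mathrm{mod}\,\Lambda_{\mathrm{C}}$; by the Crypto Lemma each $\mathbf{x}_i$ is then uniform over $\mathcal{V}_{\Lambda_{\mathrm{C}}}$, independent of $\mathbf{w}_i$, and meets \eqref{power} with equality.

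Next I would carry out the algebraic reduction to a mod-$\Lambda_{\mathrm{C}}$ channel. After scaling by $\alpha$ and subtracting the known dithers, the receiver forms $\mathbf{y}'=\left[\alpha\mathbf{y}-\sum_{i=1}^{N}a_i\mathbf{d}_i\right]\mathrm{mod}\,\Lambda_{\mathrm{C}}$. Using distributivity of the $\mathrm{mod}\,\Lambda_{\mathrm{C}}$ operation over integer linear combinations of lattice points, one obtains $\mathbf{y}'=\left[\lambda+\mathbf{z}_{\mathrm{eq}}\right]\mathrm{mod}\,\Lambda_{\mathrm{C}}$, where $\lambda$ is exactly the target equation and $\mathbf{z}_{\mathrm{eq}}=\sum_{i=1}^{N}\left(\alpha h_i-a_i\right)\mathbf{x}_i+\alpha\mathbf{z}$ is statistically independent of $\lambda$. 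Since the $\mathbf{x}_i$ are independent with per-dimension second moment $P$ and $\mathbf{z}$ has per-dimension variance $\sigma^2$, the per-dimension second moment of $\mathbf{z}_{\mathrm{eq}}$ is $N_{\mathrm{eff}}=\alpha^2\sigma^2+P\parallel\alpha\mathbf{h}-\mathbf{a}\parallel^{2}$.

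Then I would close the argument with the standard lattice decoding error estimate. Feeding $\mathbf{y}'$ to the nearest-point decoder of $\Lambda_{\mathrm{F}}$ (this is steps 1--3 of the scheme above) and reducing mod $\Lambda_{\mathrm{C}}$ returns $\lambda$ with probability tending to $1$ as $n\to\infty$, provided the nested code rate $r$ stays below $\frac{1}{2}\log\!\left(P/N_{\mathrm{eff}}\right)$, i.e. provided the volume of $\mathcal{V}_{\Lambda_{\mathrm{F}}}$ clears the Poltyrev threshold associated with noise variance $N_{\mathrm{eff}}$. Substituting $N_{\mathrm{eff}}$ and dividing numerator and denominator by $\sigma^2$ turns this condition into $r<\frac{1}{2}\log\!\left(\rho/(\alpha^2+\rho\parallel\alpha\mathbf{h}-\mathbf{a}\parallel^{2})\right)$; since a rate of $0$ is always achievable, the threshold can be written with $\log^{+}$, which is precisely $R_{\mathrm{comp}}(\mathbf{h},\mathbf{a})$ in \eqref{rateCF}.

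The main obstacle is that $\mathbf{z}_{\mathrm{eq}}$ is not Gaussian: it carries the self-interference term $\sum_{i}(\alpha h_i-a_i)\mathbf{x}_i$, a sum of lattice-uniform vectors, so the Poltyrev error exponent for Gaussian noise does not apply verbatim. Handling this requires showing that lattices from the good ensemble still decode reliably against this mixture noise — e.g. by bounding the density of $\mathbf{z}_{\mathrm{eq}}$ (possibly after an additional smoothing dither) by that of a Gaussian of marginally larger variance and letting the excess vanish, or by invoking goodness of the ensemble against any sufficiently regular (norm-ergodic) noise with the prescribed second moment. A secondary technical point is verifying that a single nested pair $\Lambda_{\mathrm{C}}\subseteq\Lambda_{\mathrm{F}}$ can meet all the required goodness properties at once, which follows from the existence results for the Loeliger/Erez--Litsyn--Zamir ensembles.
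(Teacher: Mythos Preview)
Your sketch is correct and faithfully reproduces the Nazer--Gastpar argument: dithered nested lattice coding, the mod-$\Lambda_{\mathrm{C}}$ reduction giving $\mathbf{y}'=[\lambda+\mathbf{z}_{\mathrm{eq}}]\,\mathrm{mod}\,\Lambda_{\mathrm{C}}$ with effective noise variance $N_{\mathrm{eff}}=\alpha^2\sigma^2+P\parallel\alpha\mathbf{h}-\mathbf{a}\parallel^2$, and the Poltyrev-goodness decoding step. You also correctly flag the only real subtlety, namely that $\mathbf{z}_{\mathrm{eq}}$ is a mixture of Gaussian and lattice-uniform components, and you point to the right fixes (density domination by an inflated Gaussian, or noise-goodness against norm-ergodic noise).

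There is nothing to compare against, however: the paper does not prove Theorem~\ref{Rcomp}. It is stated as background, with the sentence ``We summarize the results regarding the CF protocol in the following theorems \cite{Nazer08}'' and no accompanying proof. So your proposal is not an alternative route but rather a (sound) reconstruction of the proof that the paper simply imports by citation.
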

%\vspace{-0.25cm}
\begin{theorem}[Optimal scaling factor]
The computation rate given in Theorem \ref{Rcomp} is only maximized for the MMSE scaling factor $\alpha_{\mathrm{opt}}$ given by: $\alpha_{\mathrm{opt}} = \frac{\rho \mathbf{h}^{t} \mathbf{a} }{1+\rho \parallel \mathbf{h} \parallel ^{2}}$.
\end{theorem}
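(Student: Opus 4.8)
The plan is to turn the maximization over $\alpha$ into the minimization of a single scalar quadratic. Since $\rho > 0$ is fixed and $x \mapsto \log^{+}(x)$ is nondecreasing, $R_{\mathrm{comp}}(\mathbf{h},\mathbf{a})$ in \eqref{rateCF} is maximized over $\alpha \in \mathbb{R}$ precisely when the denominator
\begin{equation*}
f(\alpha) = \alpha^{2} + \rho \parallel \alpha \mathbf{h} - \mathbf{a} \parallel^{2}
\end{equation*}
is minimized over $\alpha \in \mathbb{R}$ (at least whenever the optimal rate is positive, so that the $\log^{+}$ is not clipped to zero; I will return to this point at the end). So the first step is simply to record that $\argmax_{\alpha} R_{\mathrm{comp}} = \argmin_{\alpha} f(\alpha)$.

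Next I would expand $f$ using $\parallel \alpha \mathbf{h} - \mathbf{a}\parallel^{2} = \alpha^{2}\parallel\mathbf{h}\parallel^{2} - 2\alpha\, \mathbf{h}^{t}\mathbf{a} + \parallel\mathbf{a}\parallel^{2}$, giving
\begin{equation*}
f(\alpha) = \left(1 + \rho \parallel\mathbf{h}\parallel^{2}\right)\alpha^{2} - 2\rho\,(\mathbf{h}^{t}\mathbf{a})\,\alpha + \rho\parallel\mathbf{a}\parallel^{2}.
\end{equation*}
Because $1 + \rho \parallel\mathbf{h}\parallel^{2} > 0$, this is a strictly convex parabola in $\alpha$; hence it admits a unique global minimizer, found by setting $f'(\alpha) = 2(1+\rho\parallel\mathbf{h}\parallel^{2})\alpha - 2\rho\,\mathbf{h}^{t}\mathbf{a} = 0$, which yields $\alpha_{\mathrm{opt}} = \frac{\rho\,\mathbf{h}^{t}\mathbf{a}}{1+\rho\parallel\mathbf{h}\parallel^{2}}$. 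Strict convexity is also what delivers the word ``only'' in the statement: every $\alpha \neq \alpha_{\mathrm{opt}}$ gives a strictly larger $f(\alpha)$, hence a strictly smaller value of $\frac{\rho}{f(\alpha)}$. To close, I would substitute back to record the minimal effective-noise variance $f(\alpha_{\mathrm{opt}}) = \rho\parallel\mathbf{a}\parallel^{2} - \frac{\rho^{2}(\mathbf{h}^{t}\mathbf{a})^{2}}{1+\rho\parallel\mathbf{h}\parallel^{2}}$, which coincides with the classical MMSE estimation error and justifies calling $\alpha_{\mathrm{opt}}$ the MMSE scaling factor.

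No step is genuinely hard here; the whole argument is convex optimization of a one-variable quadratic. The only subtlety worth flagging is the $\log^{+}$ clipping: when $f(\alpha_{\mathrm{opt}}) \geq \rho$ the maximal computation rate is zero and is attained by a whole interval of $\alpha$'s, so the uniqueness assertion should be understood as holding whenever the computation rate is strictly positive — and in all cases $\alpha_{\mathrm{opt}}$ remains the unique maximizer of the unclipped ratio $\rho / f(\alpha)$. I would also make explicit, for cleanliness, that the positivity of the leading coefficient $1+\rho\parallel\mathbf{h}\parallel^{2}$ is what guarantees the stationary point is a minimum of $f$ (equivalently a maximum of the rate) rather than a maximum.
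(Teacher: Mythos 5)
Your proof is correct: minimizing the strictly convex quadratic $f(\alpha)=(1+\rho\|\mathbf{h}\|^{2})\alpha^{2}-2\rho(\mathbf{h}^{t}\mathbf{a})\alpha+\rho\|\mathbf{a}\|^{2}$ is exactly the standard MMSE derivation of $\alpha_{\mathrm{opt}}$, which is the argument behind the result the paper quotes from \cite{Nazer08} (the paper itself states the theorem without proof). Your remark about the $\log^{+}$ clipping is a sensible refinement of the ``only'' claim and does not conflict with anything in the paper.
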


\begin{theorem}[Optimal network code vector]\label{aopt}
The optimal network code vector satisfies:
\begin{equation}\label{opti}
\mathbf{a}_{\mathrm{opt}} = \argmin_{\mathbf{a} \in \mathbb{Z}^{N}, \mathbf{a} \neq \mathbf{0}} \left \{ \mathbf{a}^{t}\mathbf{G} \mathbf{a} \right \}
\end{equation}
where $\mathbf{G} = \mathbf{I}_{N} - \frac{\rho}{1+\rho\parallel \mathbf{h} \parallel ^{2}}\mathbf{h}\mathbf{h}^{t}$ is definite positive. $\mathbf{a}_{\mathrm{opt}}$ corresponds to the shortest vector in the lattice $\Lambda_{\mathbf{G}}$ of Gram matrix $\mathbf{G}$.
\end{theorem}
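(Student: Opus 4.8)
The plan is to turn the integer optimization over $\mathbf{a}$ into a shortest–vector problem by performing the (continuous) optimization over the scaling factor $\alpha$ first, and then reading off what is left. Starting from the achievable computation rate of Theorem~\ref{Rcomp}, I note that $R_{\mathrm{comp}}(\mathbf{h},\mathbf{a})$ is a monotonically decreasing function of the denominator $f(\alpha,\mathbf{a}) = \alpha^2 + \rho\parallel \alpha\mathbf{h} - \mathbf{a}\parallel^2$, so that jointly maximizing the rate over $(\alpha,\mathbf{a})$ is equivalent to jointly minimizing $f$. For each fixed $\mathbf{a}$, $f$ is a strictly convex quadratic in $\alpha$: expanding gives $f(\alpha,\mathbf{a}) = (1+\rho\parallel\mathbf{h}\parallel^2)\alpha^2 - 2\rho(\mathbf{h}^t\mathbf{a})\alpha + \rho\parallel\mathbf{a}\parallel^2$, and setting the derivative in $\alpha$ to zero recovers exactly the MMSE scaling $\alpha_{\mathrm{opt}} = \rho\mathbf{h}^t\mathbf{a}/(1+\rho\parallel\mathbf{h}\parallel^2)$ of the optimal–scaling theorem. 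This justifies that the joint optimization decouples, the inner minimization in $\alpha$ being solved in closed form.

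Substituting $\alpha_{\mathrm{opt}}$ back into $f$, a short computation (equivalently, the Sherman--Morrison identity applied to $\mathbf{I}_N+\rho\mathbf{h}\mathbf{h}^t$) collapses the three terms to $f(\alpha_{\mathrm{opt}},\mathbf{a}) = \rho\big(\parallel\mathbf{a}\parallel^2 - \tfrac{\rho(\mathbf{h}^t\mathbf{a})^2}{1+\rho\parallel\mathbf{h}\parallel^2}\big) = \rho\,\mathbf{a}^t\mathbf{G}\mathbf{a}$ with $\mathbf{G} = \mathbf{I}_N - \tfrac{\rho}{1+\rho\parallel\mathbf{h}\parallel^2}\mathbf{h}\mathbf{h}^t$. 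Hence the maximal achievable rate for a given $\mathbf{a}$ equals $\tfrac12\log^+(1/\mathbf{a}^t\mathbf{G}\mathbf{a})$, a decreasing function of the quadratic form $\mathbf{a}^t\mathbf{G}\mathbf{a}$; maximizing over admissible network code vectors is therefore equivalent to $\mathbf{a}_{\mathrm{opt}} = \argmin_{\mathbf{a}\in\mathbb{Z}^N,\ \mathbf{a}\neq\mathbf{0}}\mathbf{a}^t\mathbf{G}\mathbf{a}$, the choice $\mathbf{a}=\mathbf{0}$ being excluded because it corresponds to the degenerate all–zero equation (and makes the rate expression ill-posed).

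It then remains to establish the two structural claims. Positive definiteness of $\mathbf{G}$ follows from Cauchy--Schwarz: for any $\mathbf{x}\neq\mathbf{0}$, $\mathbf{x}^t\mathbf{G}\mathbf{x} = \parallel\mathbf{x}\parallel^2 - \tfrac{\rho(\mathbf{h}^t\mathbf{x})^2}{1+\rho\parallel\mathbf{h}\parallel^2} \geq \parallel\mathbf{x}\parallel^2\big(1 - \tfrac{\rho\parallel\mathbf{h}\parallel^2}{1+\rho\parallel\mathbf{h}\parallel^2}\big) = \tfrac{\parallel\mathbf{x}\parallel^2}{1+\rho\parallel\mathbf{h}\parallel^2} > 0$, which also confirms that $f$ attains a genuine minimum. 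Being symmetric positive definite, $\mathbf{G}$ admits a Cholesky factorization $\mathbf{G}=\mathbf{L}^t\mathbf{L}$, so $\Lambda_{\mathbf{G}} = \{\mathbf{L}\mathbf{a} : \mathbf{a}\in\mathbb{Z}^N\}$ is a lattice whose Gram matrix is $\mathbf{G}$ and for which $\mathbf{a}^t\mathbf{G}\mathbf{a} = \parallel\mathbf{L}\mathbf{a}\parallel^2$; minimizing the quadratic form over nonzero integer vectors is then literally the search for the shortest nonzero vector of $\Lambda_{\mathbf{G}}$, identifying $\mathbf{a}_{\mathrm{opt}}$ as claimed.

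I expect the only real obstacle to be bookkeeping rather than conceptual: carrying out the back-substitution of $\alpha_{\mathrm{opt}}$ cleanly (best done via Sherman--Morrison) and making sure the decoupling of the joint $(\alpha,\mathbf{a})$ optimization is stated carefully. The positive-definiteness and the lattice reinterpretation are then one-line consequences.
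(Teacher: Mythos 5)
Your proposal is correct and follows essentially the same route as the derivation the paper relies on (the theorem is stated without proof and attributed to Nazer--Gastpar \cite{Nazer08}): minimize the effective-noise term $\alpha^2+\rho\|\alpha\mathbf{h}-\mathbf{a}\|^2$ over $\alpha$ in closed form, which recovers the MMSE coefficient, substitute back to obtain $\rho\,\mathbf{a}^t\mathbf{G}\mathbf{a}$ with $\mathbf{G}=\mathbf{I}_N-\tfrac{\rho}{1+\rho\|\mathbf{h}\|^2}\mathbf{h}\mathbf{h}^t$, and conclude via positive definiteness (Cauchy--Schwarz) and a Cholesky factorization that maximizing the rate is the shortest-vector problem in the lattice of Gram matrix $\mathbf{G}$. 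All steps check out, including the exclusion of $\mathbf{a}=\mathbf{0}$ and the decoupling of the joint optimization.
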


The conventional decoding scheme for the CF consists of an MMSE scaling operation and a minimum distance decoding. The problem is that in presence of the non-Gaussian effective noise $\mathbf{z}_{\mathrm{eq}}$, minimum distance decoding is not ML decoding. Although the conventional decoder is proved to be optimal in asymptotic regime using high dimensional lattices, its performance gap to the optimal decoders is not known particularly in practical settings using finite-dimensional lattices. We aim in the following to study optimal decoding criteria and develop practical efficient decoding algorithms. Although we will consider the real-valued channel, our results hold in the complex-valued channel case using the same techniques at the real and imaginary parts of the channel output separately.

\section{Efficient Decoders in Fading Channels}
We start in this section with the case of fading channels. The tools we will use in our analysis are valid only in the case of integer lattices, thus we will consider an $n-$dimensional nested lattice code $\mathcal{C}_{\Lambda} \subset \mathbb{Z}^{n}$ involving a fine lattice $\Lambda_{\mathrm{F}} \subset  \mathbb{Z}^{n}$ of a generator matrix $\mathbf{M}$ and a coarse lattice $\Lambda_{\mathrm{C}} \subset \mathbb{Z}^{n} $. For this case, $\mathbf{M}$ is an integer full rank matrix. We will start with the multi-dimensional case which was independently studied in \cite{Belfiore12} then we provide more in depth analysis regarding the one-dimensional case. 
\subsection{Problem Statement}
After selecting $\alpha$ and $\mathbf{a}$, the receiver scales the channel output to get:
\begin{equation}
\tilde{\mathbf{y}} = \sum_{i=1}^{N} a_i \mathbf{x}_i + \sum_{i=1}^{N} \left( \tilde{h}_i - a_i \right) \mathbf{x}_i + \tilde{\mathbf{z}}
\end{equation}
where $\tilde{h}_i=\alpha h_i, i=1,...,N$ and $\tilde{\mathbf{z}}=\alpha \mathbf{z}$, and attempts to decode $\lambda = \left[ \sum_{i=1}^{N} a_i \mathbf{x}_i \right] \mathrm{mod}~\Lambda_{\mathrm{C}}$. We are concerned in this part with ML decoding for recovering the integer combination $\mathbf{t}=\sum_{i=1}^{N} a_i \mathbf{x}_i$. The modulo-lattice operation is performed in a second stage separately and does not impact the error performance. Thus, we evaluate the decoding error probability defined as: $\mathrm{P}_{\mathrm{e}} = \mathrm{Pr}\left( \hat{\mathbf{t}} \neq \mathbf{t} \right)$. 
Given the vector $\mathbf{a}$ and the shaping boundaries for the original codewords, it is known that the searched vector $\mathbf{t}$ belongs to a subset $\Lambda_{f}$ in the fine lattice $\Lambda_{\mathrm{F}}$. This shaping constraint is also disregarded under the conventional CF decoder. In the following, we analyze the ML decoder that takes into consideration this shaping condition. 

\subsection{ML Decoding Metric}
The ML criterion is based on maximizing the conditional probability $p\left( \tilde{\mathbf{y}} | \mathbf{t} \right)$ according to:
 \begin{equation}\label{ml}
 \hat{\mathbf{t}} = \argmax_{\mathbf{t} \in \Lambda_{f}}  p\left( \tilde{\mathbf{y}} | \mathbf{t} \right) 
 \end{equation}
Given that $\mathbf{t} = \sum_{i=1}^{N} a_i \mathbf{x}_i$, we can equivalently write (\ref{ml}) as:
\begin{equation}\label{ml_1}
 \small{\mathbf{\hat{t}} = \argmax_{\mathbf{t} \in \Lambda_{f}} \sum_{\scriptstyle \left ( \mathbf{x}_1,...,\mathbf{x}_N \right ) \in \Lambda^{N}/\scriptstyle \sum_{i=1}^{N} a_i \mathbf{x}_i=\mathbf{t} }  p \left ( \tilde{\mathbf{y}} | \left ( \mathbf{x}_1,...,\mathbf{x}_N \right ) \right ) p \left ( \mathbf{x}_1,...,\mathbf{x}_N\right )} 
 \end{equation}
The transmitted codewords are assumed to be uniformally distributed over the nested lattice code $\mathcal{C}_{\Lambda}$, i.e., $\mathbf{x}_1,...,\mathbf{x}_N$ are equiprobable. On the other hand, we have,
\begin{equation}\label{proba}
p \left ( \tilde{\mathbf{y}} | \mathbf{x}_1,...,\mathbf{x}_N \right ) \hspace{0.4cm} \propto \exp \left ( \frac{-1}{2\tilde{\sigma}^{2}}\parallel \tilde{\mathbf{y}}- \sum_{i=1}^{N} \tilde{h}_i\mathbf{x}_i \parallel ^{2} \right )
\end{equation}
where $\tilde{\sigma}^{2}=\alpha^{2}\sigma^{2}$. Combining (\ref{proba}) and (\ref{ml_1}), we get:
 \begin{equation}\label{decoding}
  \small{\mathbf{\hat{t}}=\argmax_{\mathbf{t} \in \Lambda_{f}} \sum_{\scriptstyle \left ( \mathbf{x}_1,...,\mathbf{x}_N \right ) \in \Lambda^{N}/\scriptstyle  \sum_{i=1}^{N} a_i \mathbf{x}_i=\mathbf{t}} \exp \left ( \frac{-1}{2\tilde{\sigma}^{2}}\parallel \tilde{\mathbf{y}}- \sum_{i=1}^{N} \tilde{h}_i\mathbf{x}_i \parallel ^{2} \right )} 
 \end{equation}
Let 
\begin{equation}
\varphi \left ( \mathbf{t} \right ) = \sum_{\scriptstyle \left ( \mathbf{x}_1,...,\mathbf{x}_N \right ) \in \Lambda^{N}/\scriptstyle  \sum_{i=1}^{N} a_i \mathbf{x}_i=\mathbf{t} } \exp \left ( \frac{-1}{2\tilde{\sigma}^{2}}\parallel \tilde{\mathbf{y}}- \sum_{i=1}^{N} \tilde{h}_i\mathbf{x}_i \parallel ^{2} \right )
\end{equation}
Our objective in the following is to express $\varphi$ as a function of the desired equation $\mathbf{t}$. To this end, we need to express the codewords $\mathbf{x}_i,i=1,...,N$ as functions of $\mathbf{t}$. Given the integer nature of the vector $\mathbf{a}$ and the codewords $\mathbf{x}_i$, this task requires to solve the system of diophantine equations $\mathbf{t} = \sum_{i=1}^{N} a_i \mathbf{x}_i$. For $n-$dimensional vectors this can be done using the Hermite Normal Form (HNF) of integral matrices \cite{Cohen93,Lazebnik96} as explained in the following.

Define the integer-valued matrix $\tilde{\mathbf{M}} \in \mathbb{Z}^{n \times nN}$ as $\tilde{\mathbf{M}} = \left[ a_1\mathbf{M} ~ a_2\mathbf{M}~...~a_N\mathbf{M} \right]$.
The Hermite Normal Form of $\tilde{\mathbf{M}}$ is such that: $\tilde{\mathbf{M}} \mathbf{U} = \left[ \mathbf{0}^{n \times (N-1)n} | \mathbf{B} \right]$
where $\mathbf{U} \in \mathbb{Z}^{nN \times nN}$ is a unimodular matrix, and $\mathbf{B} \in \mathbb{Z}^{n \times n}$ is an invertible matrix.
Then, we decompose the matrix $\mathbf{U}$ in the form:
\begin{equation}
\mathbf{U} = \left[ \begin{array}{cc}
\mathbf{U}_1 & \mathbf{V}_1 \\
\mathbf{U}_2 & \mathbf{V}_2 \\
\vdots & \vdots \\
\mathbf{U}_N & \mathbf{V}_N
\end{array} \right] ~,~\mathbf{V}_i \in \mathbb{Z}^{n \times n}~,~\mathbf{U}_i \in \mathbb{Z}^{n \times n(N-1)}
\end{equation}
The solution of the system of diophantine equations is then given by $\mathbf{x}_i = \mathbf{d}_i + \mathbf{v}_i$
where $\mathbf{v}_i=\mathbf{M}\mathbf{V}_i\mathbf{B}^{-1}\mathbf{t}$ and $\mathbf{d}_i$ belong to the lattice of a generator matrix $\mathbf{M}\mathbf{U}_i$ for $i=1,...,N$.
\subsection{Likelihood Function}
We go back now to the ML decoding rule defined in (\ref{decoding}) and replace the vectors $\mathbf{x}_i$ by the solution of the diophantine equations we obtain  
 \begin{equation}\label{mldecoding}
  \small{\mathbf{\hat{t}}=\argmax_{\mathbf{t} \in \Lambda_{f}} \sum_{\mathbf{q} \in \mathcal{L}} \exp \left ( \frac{-1}{2\tilde{\sigma}^{2}}\parallel \omega(\mathbf{t})- \mathbf{q} \parallel ^{2} \right )} 
 \end{equation}
where $\mathbf{q}=\sum_{i=1}^{N} \tilde{h}_i \mathbf{d}_i$ belongs to the lattice $\mathcal{L}$ of a generator matrix $\sum_{i=1}^{N} \tilde{h}_i \mathbf{M} \mathbf{U}_i$ and $\omega(\mathbf{t})=\tilde{\mathbf{y}}-\sum_{i=1}^{N} h_i \mathbf{M}\mathbf{V}_i \mathbf{B}^{-1}\mathbf{t}$. 

To find the ML solution, we need to maximize the likelihood function:
\begin{equation}
\varphi (\mathbf{t}) = \sum_{\mathbf{q} \in \mathcal{L}} \exp \left ( \frac{-1}{2\tilde{\sigma}^{2}}\parallel \omega(\mathbf{t})- \mathbf{q} \parallel ^{2} \right ) 
\end{equation}
This function is a sum of Gaussian measures, it is periodic and depends on the Signal-to-Noise Ratio. Additionally, its most important characteristic is that it can be flat, which means that for some values of the channel coefficients, the network code vector and the Signal-to-Noise Ratio, the maximum of $\varphi$ can be achieved by several values of $\mathbf{t}$, which makes the ML decision ambiguous and results in decoding errors. This flatness behavior is characterized by Belfiore and Ling in \cite{Belfiore12} by the so called the \textit{Flatness Factor}. For the ML decoding rule, we should minimize the flatness factor of the lattice $\mathcal{L}$ over which is performed the sum of the Gaussian measures in order to be able to distinguish the maximum values of the likelihood function and perform a correct decoding decision. Solving the ML decoding metric requires more research on the sum of Gaussian measures. Alternatively, authors in \cite{Belfiore12} propose an approximation of ML decoding based on \textit{Diophantine Approximation} and consists in the optimization problem given by:
\begin{equation}
\hat{\mathbf{t}} = \argmax_{\mathbf{t} \in \Lambda_{f},~ \mathbf{q} \in \mathcal{A}_{\mathcal{L}}} \parallel \omega(\lambda)- \mathbf{q} \parallel ^{2}
\end{equation}
Where $\mathcal{A}_{\mathcal{L}}$ is a finite subset of the lattice $\mathcal{L}$ fixed by the boundaries of the original codewords according to the transmission power constraint. For one-dimensional lattices, there are several algorithms pertaining to the resolution of the diophantine approximations of reals. However, solving the multi-dimensional case requires additionally to develop efficient algorithms to handle simultaneous diophantine approximations. We study in the following the 1-D case in more details.

\subsection{1-D Lattices Case Study}
Here, we focus on the case of 1-D lattices in $\mathbb{Z}$ and $N=2$. Transmitted codewords $x_1$ and $x_2$ are just integer scalars drawn i.i.d from the integer constellation over $\mathbb{Z}$ defined by $\mathcal{A}=\left[-S_\mathrm{m}~S_\mathrm{m} \right]$ for $S_\mathrm{m} \in \mathbb{Z}^{+}$. This integer codebook can be seen as a nested lattice code in $\mathbb{Z}$ involving the fine lattice $\Lambda_{\mathrm{F}}=\mathbb{Z}$ and the coarse lattice $\Lambda_{\mathrm{C}}=2S_\mathrm{m} \mathbb{Z}$. The channel output in this case is given by: $y = h_1 x_1 + h_2 x_2 + z$, with $h_i \in \mathbb{R}$ and $z \sim \mathcal{N}(0,\sigma^{2})$. The receiver selects the optimal scaling parameter and the optimal network code vector $\mathbf{a}=[a_1~a_2]^{t}$ and attempts to decode the integer combination $t=a_1 x_1+a_2 x_2$ from the integer set $\mathcal{A}_{t}$ determined by $S_\mathrm{m}$ and the values of the coefficients $a_1$ and $a_2$. The scaled channel output is given by:
\begin{align}
\tilde{y} &= a_1 x_1 + a_2x_2 + \left( \tilde{h}_1 - a_1 \right) x_1 + \left( \tilde{h}_2 - a_2 \right) x_2 + \tilde{z} \notag \\
&\tilde{h}_i=\alpha h_i, i=1,2 ~;~ \tilde{z}=\alpha z
\end{align}
Under these settings, the ML solution is given by:
 \begin{align}\label{mlscalar}
 \hat{t} &= \argmax_{t \in \mathcal{A}_{t}} \sum_{\scriptstyle \left (x_1,x_2 \right) \in \mathcal{A}^{2}/\scriptstyle a_1x_1+a_2x_2=t} \exp \left ( \frac{-1}{2\tilde{\sigma}^{2}}\parallel \tilde{y}-  \tilde{h}_1x_1 - \tilde{h}_2x_2 \parallel ^{2} \right ) 
 \end{align}
And the likelihood function is given by:
\begin{equation}
\varphi(t) = \sum_{\scriptstyle \left (x_1,x_2 \right) \in \mathcal{A}^{2}/\scriptstyle a_1x_1+a_2x_2=t} \exp \left ( \frac{-1}{2\tilde{\sigma}^{2}}\parallel \tilde{y}-  \tilde{h}_1x_1 - \tilde{h}_2x_2 \parallel ^{2} \right )
\end{equation}
Our aim now is to express $\varphi$ as a function of $t$ only. Therefore, we need to solve the \textit{Diophantine Equation} $t=a_1x_1+a_2x_2$. Let $g=a_1\wedge a_2$ denote the greatest common divisor (gcd) of $a_1$ and $a_2$. If the desired scalar $t$ is a multiple of $g$, the diophantine equation admits an infinite number of solutions in the form:
\begin{gather}\label{bounds}
\raisetag{-10pt}
\begin{cases}
 x_1 = \frac{u_1}{g} t + \frac{a_2}{g} k
\\
x_2 = \frac{u_2}{g} t - \frac{a_1}{g} k
\end{cases}  
\end{gather} 
where $k \in \mathbb{Z}$ and $\left(u_1,u_2\right)$ is a particular solution of the equation $a_1x_1+a_2x_2=g$ that can be derived using the \textit{Extended Euclid Algorithm} \cite{Cormen09}. If $t$ is not a multiple of $g$, then the diophantine equation has no solutions. For what concerns our case, the network code vector $\mathbf{a}$ corresponds to the coordinates of a lattice shortest vector, then the coefficients $a_1$ and $a_2$ are coprime. Thus, the diophantine equation under question has always infinite solutions given by the system in (\ref{bounds}) with $g=1$. Accordingly, we can write the ML solution in (\ref{mlscalar}) as
\begin{equation}\label{topt}
\hat{t} = \argmax_{t \in \mathcal{A}_t} \underbrace{\sum_{k=-\infty }^{+\infty } exp \left ( \frac{-1}{2\tilde{\sigma}^{2}} \parallel \tilde{y} - \gamma t + \beta k  \parallel ^{2} \right )}_{\varphi(t)}
\end{equation}
where $\gamma = \tilde{h}_1 u_1 + \tilde{h}_2 u_2$, $\beta = a_1 \tilde{h}_2 - a_2 \tilde{h}_1$ and $k \in \mathbb{Z}$.
   
\subsubsection{Properties of the likelihood function}
 $\varphi$ is a sum of gaussian functions, it is periodic with \textit{mean} $m = \tilde{y}$, \textit{period} $p = \frac{\beta}{2\tilde{\sigma}^{2}}$ and \textit{width} $w = \frac{\gamma}{2\tilde{\sigma}^{2}}$.  In addition, $\varphi$ depends on the SNR, the channel coefficients, the coefficient vector $\mathbf{a}$ and obviously on the constellation bounds defined by $S_m$.  We illustrate in Fig.\ref{flat1} an example of the likelihood function obtained for $S_m=5$, $x_1=3$, $x_2=4$ at $\mathrm{SNR}=10\mathrm{dB}$ and $\mathbf{h}=[-1.191~1.189]^{t}$. The optimal network code vector for this case is equal to $\mathbf{a}=[-1~1]^{t}$. Accordingly, the desired combination should be equal to $t=1$. The corresponding likelihood function depicted in Fig.\ref{flat1} is well maximized at $\hat{t}=1$. In this case, it is easy to decode the maximum of $\varphi(t)$ since we can distinguish a peak corresponding to the unique $\hat{t}$ for which this function is maximized.
 
 \begin{figure}[htp]
  \centering
  \subfigure[$S_\mathrm{m}=5,\mathrm{SNR}=60\mathrm{dB}$]{\includegraphics[height=6cm,width=5cm]{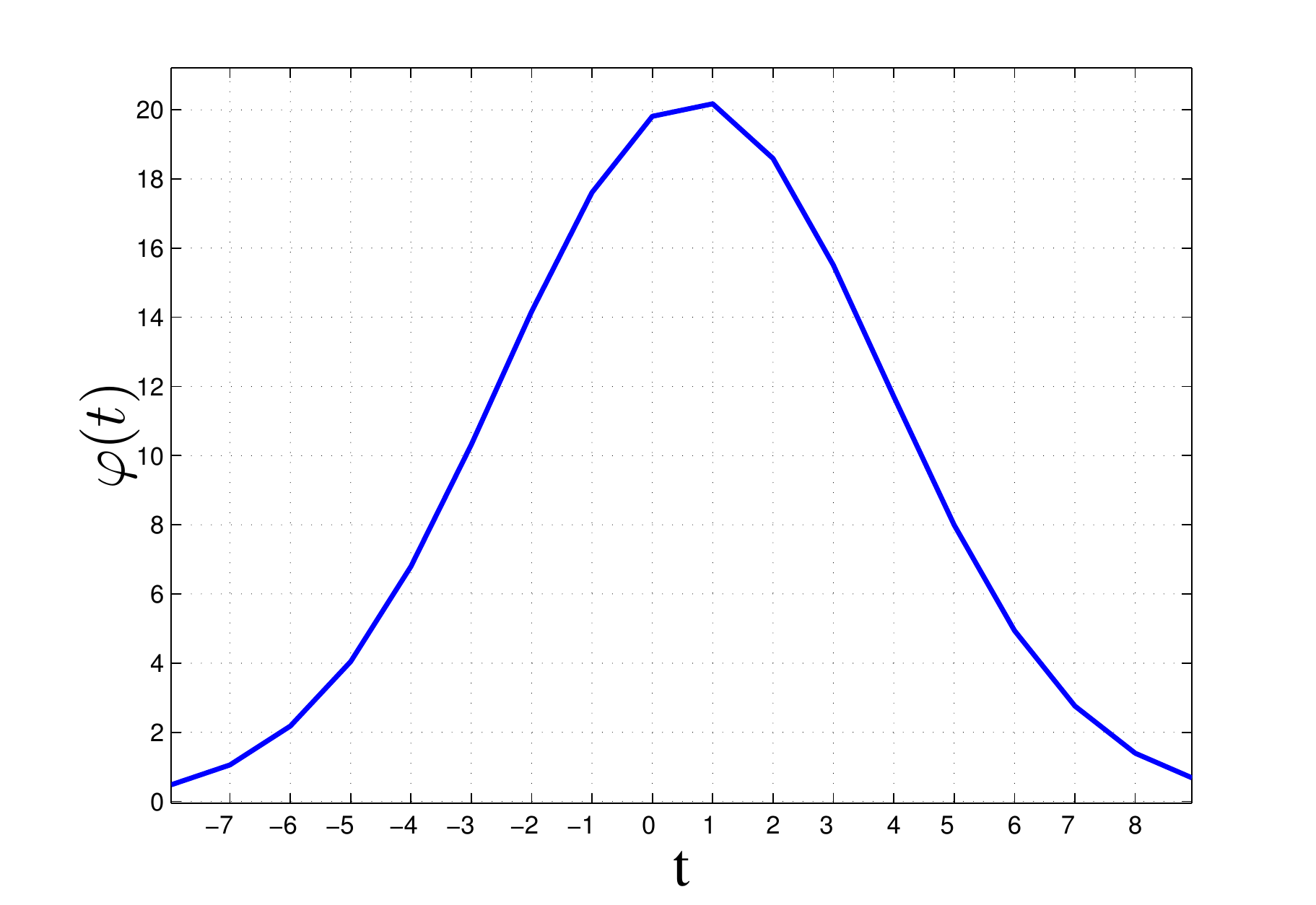}\label{flat1}} \
  \subfigure[$S_\mathrm{m}=5,\mathrm{SNR}=60\mathrm{dB}$]{\includegraphics[height=6cm,width=5cm]{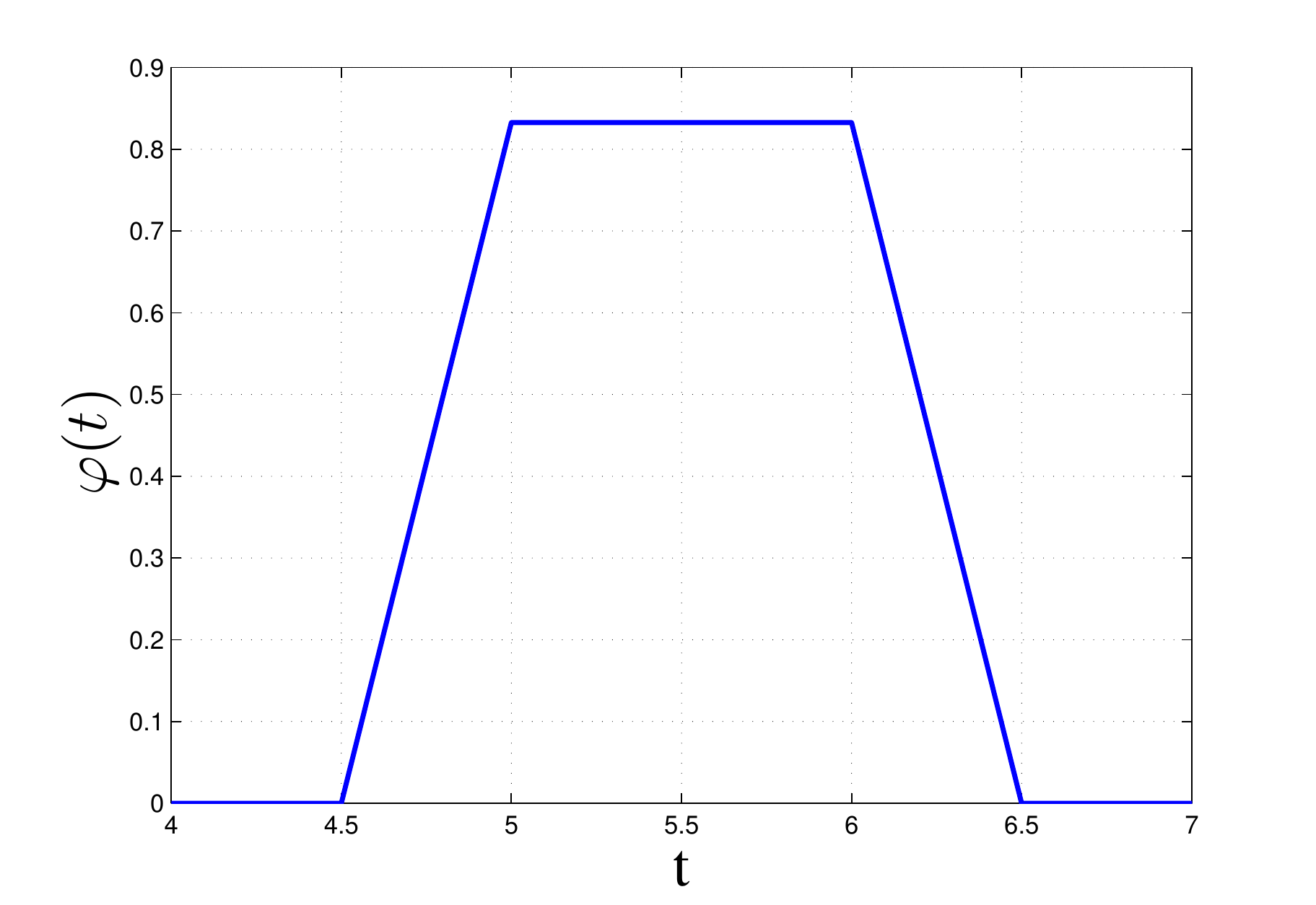}\label{flat2}} \
  \subfigure[$S_\mathrm{m}=10,\mathrm{SNR}=10\mathrm{dB}$]{\includegraphics[height=6cm,width=5cm]{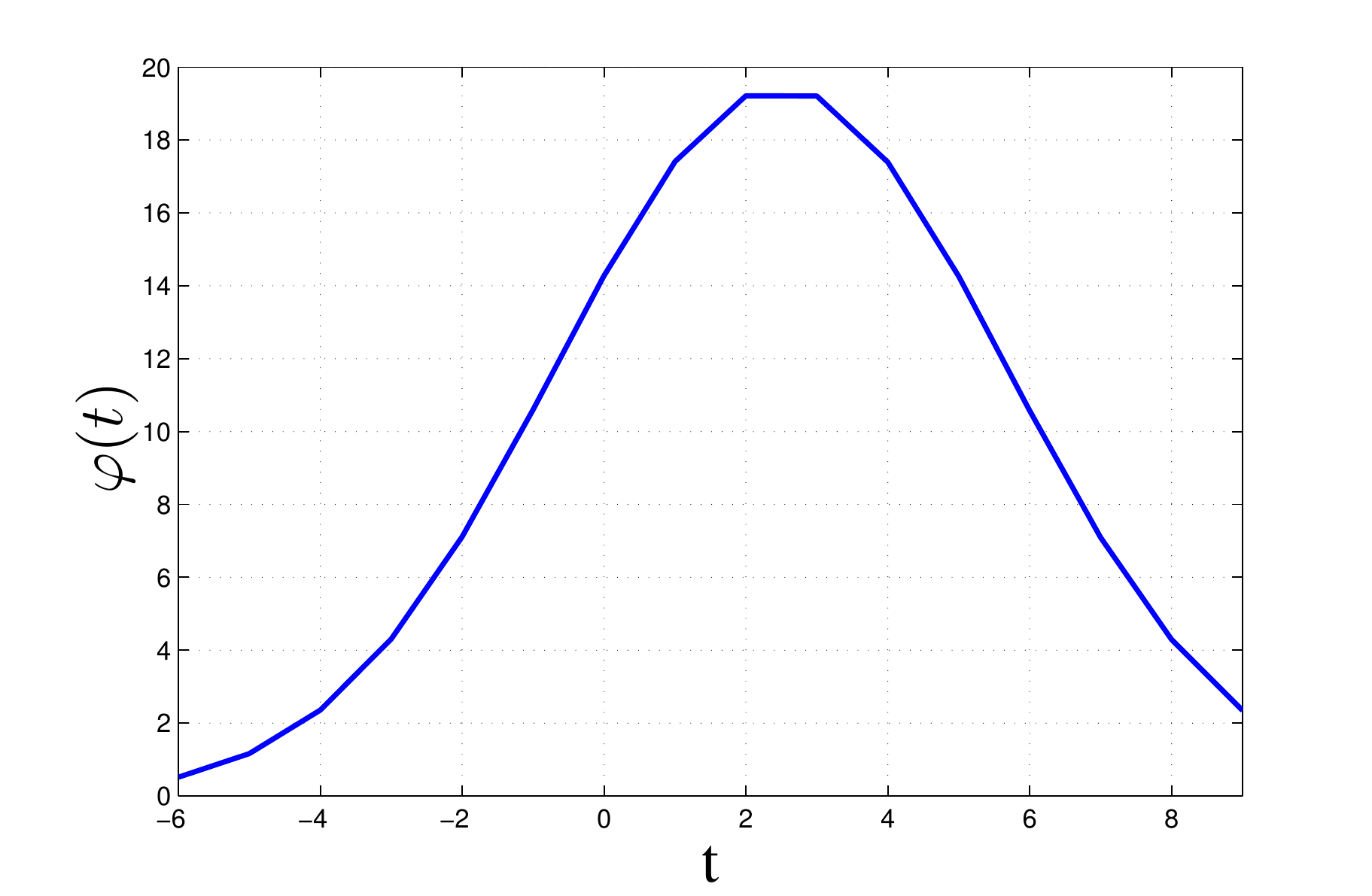}\label{flat3}}
  \caption{Examples of the Likelihood function.}
  \end{figure}
  %\vspace{-0.5cm}   
\emph{- Impact of $\mathbf{h}$ and $\mathbf{a}$:} the choice of the network code vector $\mathbf{a}$ can greatly impact the behavior of the likelihood function. Particularly, when this integer vector is aligned to the channel vector $\mathbf{h}$ (they become colinear), the period $p=\frac{a_1\tilde{h}_1-a_2\tilde{h}_2}{2\tilde{\sigma}^{2}}$ of the likelihood function becomes small and results in a flatness of $\varphi$ and impossibility of decoding the right $\hat{t}$ since the maximum can be obtained for different values. This result is demonstrated through Fig.\ref{flat2} obtained at $\mathrm{SNR}=60\mathrm{dB}$
$S_m=5, x_1=-5, x_2=-4, \mathbf{h}=[1.3681 -0.2359]^{t}, \mathbf{a}=[-1~0]^{t}$. The maximum of the likelihood function is obtained for two integer values $t_1=5$ and $t_2=6$ while the correct decodable value must be $\hat{t}=5$ for the corresponding values of $x_1$ and $x_2$. This happens at high SNR range for which the maximization of the computation rate requires to align $\mathbf{a}$ to $\mathbf{h}$.  \\ 
\emph{- Impact of the constellation size:} the likelihood function depends on the constellation size and the values of $S_m$. When the size of the codebook increases, the set $\mathcal{A}_t$ over which the desired combination $t$ should be searched becomes large. Consequently, the width of $\varphi$ becomes large and the likelihood function is made flat. Thus, decoding the maximal value of $t$ becomes ambiguous. An example of this scenario is illustrated in Fig.\ref{flat3} obtained for   
 $S_m=10, \mathrm{SNR}=10\mathrm{dB}, x_1=-2, x_2=-4, \mathbf{h}=[1.4741~-0.2839]^{t}, \mathbf{a}=[-1~0]^{t}$. We can see that the likelihood function attains its maximum for $t=2$ and $t=3$ while the correctly decoded value is $\hat{t}=2$. This ambiguity leads to decoding errors.
\subsubsection{Diophantine Approximation}
The sum of Gaussian functions in the likelihood function makes the ML decoding hard to handle in practice. For this purpose, we use the result stating that for $t \in \mathbb{Z}$, $\varphi$ is maximized for $t$ which minimizes $\mid \tilde{y} - \gamma t + \beta k \mid$. Given this observation, we define a new optimization problem equivalent to (\ref{topt}) by:
\begin{equation}
\hat{t} = \argmin_{k \in \mathbb{Z}, t \in \mathcal{A}_t}  \mid \tilde{y} - \gamma t + \beta k \mid 
\end{equation}
Let $\beta^{'} = \frac{\beta}{\gamma}$ and $y^{'} = -\frac{\tilde{y}}{\gamma}$, then this minimization problem is equivalent to:
\begin{equation}\label{ida}
\hat{t} = \argmin_{k \in \mathbb{Z}, t \in \mathcal{A}_t}   \mid \beta^{'}k - t - y^{'} \mid 
\end{equation}
This problem corresponds to solving the \textit{Inhomogeneous Diophantine Approximation in the absolute sens} (IDA) \cite{Clarkson97}, $F(t,k)$, defined as, $F(t ,k)= \mid \beta^{'}k - t - y^{'} \mid$. It consists in finding the best rational approximation $\frac{t}{k}, k \in \mathbb{Z}$ of the real number $\beta^{'}$ assumed an additional real shift $y^{'}$. In our setting, the set of the diophantine approximations is determined by the limits imposed by the shaping boundaries $\mathcal{A}_t$. In literature, there exist simple and easy-to-implement algorithms to solve Diophantine Approximations of reals. The best known one is the \textit{Cassel's Algorithm} \cite{Cassel57}. In this work we adopt a modified version of this algorithm to take into consideration the shaping constraint and ensure that the resulting solution $(t, k)$ satisfies $t \in \mathcal{A}_t$.

\subsubsection{Simulation results}
We address now the performance evaluation of the conventional decoder and the proposed Inhomogenous Diophantine
Approximation (IDA) decoder. We consider the same settings analyzed previously involving two sources transmitting integer symbols $x_1$ and
$x_2$ drawn from the constellation set
$\mathcal{A}=[-S_m~S_m]$. We analyze the error probability on decoding $t$. For what concerns the conventional decoder, the receiver solves for the best network code vector $\mathbf{a}$ solution of the shortest vector problem, scales the channel output, then decodes to the nearest integer value. For the IDA, given the vector $\mathbf{a}$, the receiver implements first the \textit{Extended Euclid} algorithm to solve the Diophantine equation $a_1x_1+a_2x_2=g$, then uses the modified Cassel's algorithm to find the best inhomogeneous Diophantine approximation. In Fig.\ref{perf1ml} minimum distance decoding and IDA decoding are compared for $S_m=5$. Our results show that both decoding methods achieve same performance in low and moderate SNR values. The importance of the IDA method rises asymptotically, since for this case, the conventional decoder presents a floor in the error probability. In Fig.\ref{perf2ml}, we analyze the performance of the proposed IDA decoding for three values of the constellation bound, defined by $S_m=5,7,10$. This is to understand the impact of the constellation size on the diversity order. Fig.\ref{perf2ml} illustrates that for $S_m=5$ or less, the system has a diversity order equal to 1 for real symbols (which would correspond to a diversity order equal to 2 with complex-valued symbols). However, for higher constellation size, e.g., for $S_m=7$ and $S_m=10$, the diversity order is limited to $1/2$. This is because when the constellation range increases, the likelihood function becomes flat, which makes the error function $F(t,k)$ subject to the diophantine approximation flat. This result confirms our previous analysis on the impact of the constellation on the likelihood function.
\begin{figure}[h]
 \centering
 \includegraphics[height=9cm,width=11.5cm]{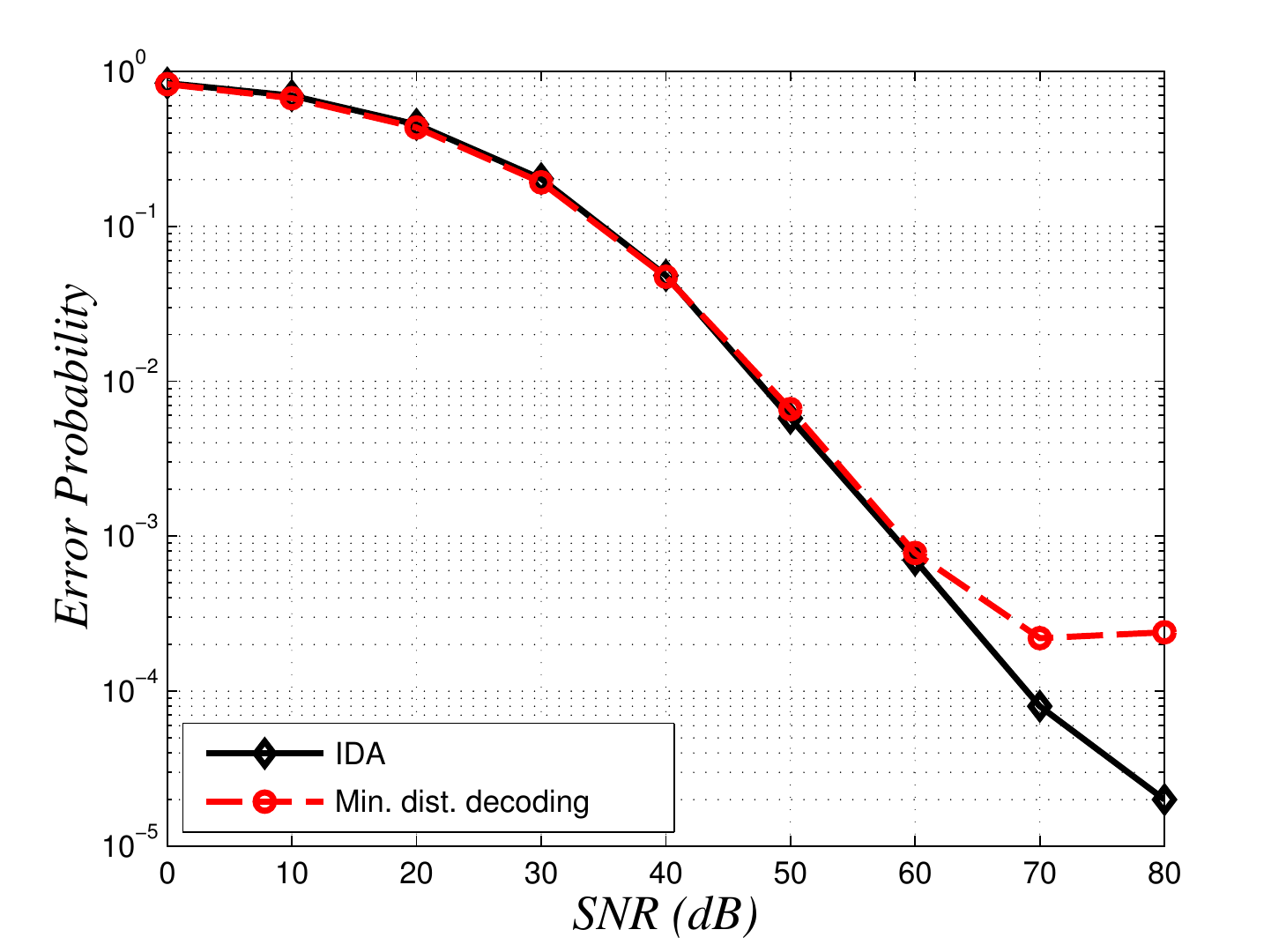}
 \caption{Error Probability for $S_m=5$.} \label{perf1ml}
 \end{figure}
  \begin{figure}[h]
   \centering
   \includegraphics[height=9cm,width=11.5cm]{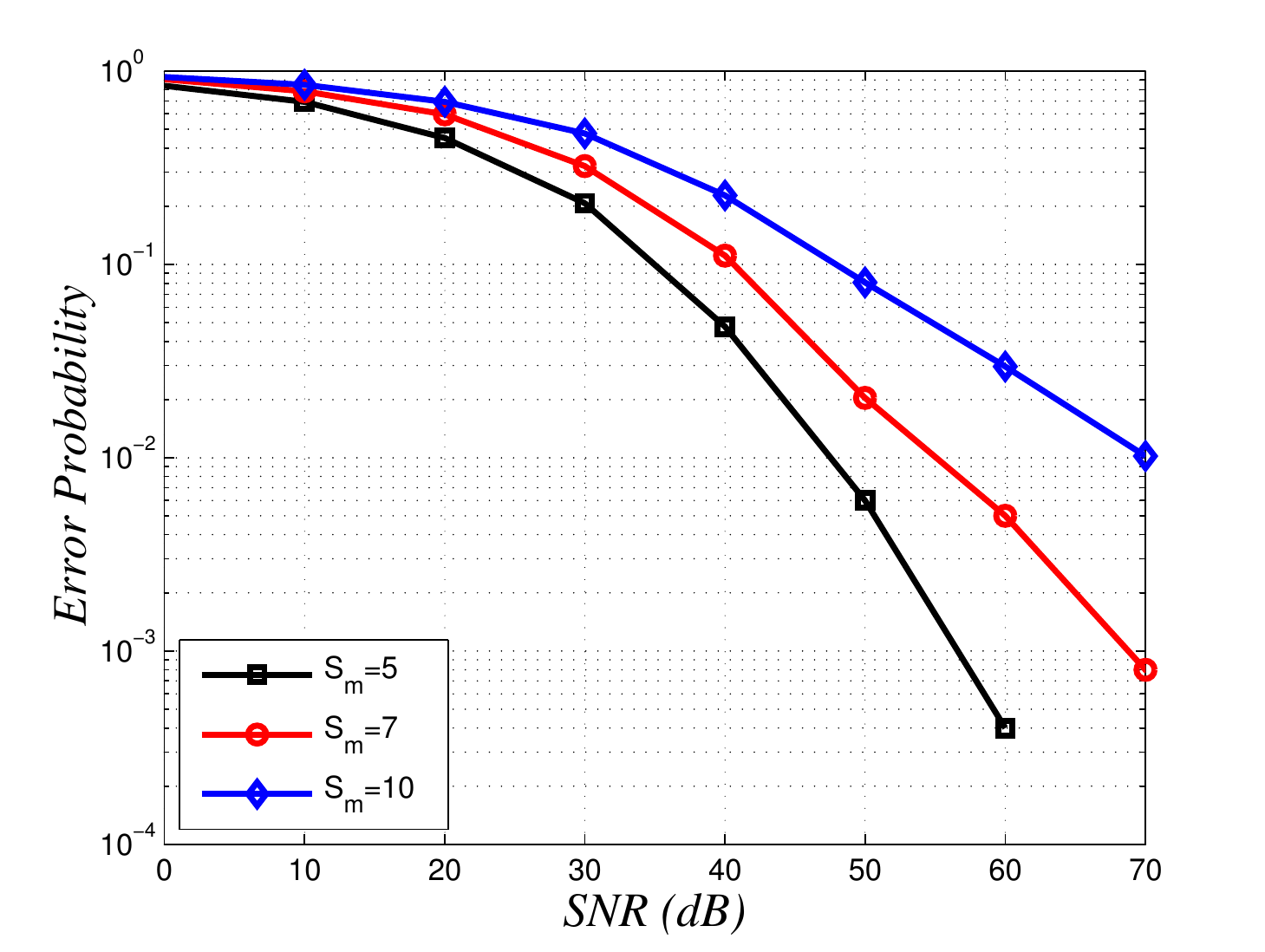}
   \caption{Error Probability using the Inhomogeneous Diophantine approximation.} \label{perf2ml}
   \end{figure}
\section{Efficient Decoders in Gaussian Channels}
\subsection{Problem Statement}
The system we are interested in within now is the real-valued Gaussian multiple access channel using real-valued nested lattice coding. The channel output is given by: $\mathbf{y} = \sum_{i=1}^{N} \mathbf{x}_i + \mathbf{z}$. The receiver aims to decode the noiseless sum $\lambda= \left[ \sum_{i=1}^{N} \mathbf{x}_i \right] \mathrm{mod}~\Lambda_{\mathrm{C}}$. 

Let $\Lambda_{\mathrm{s}}$ denote the \textit{sum codebook} which is the set of all $\lambda_{\mathrm{s}}=\sum_{i=1}^{N} \mathbf{x}_i$. Given the linear structure of the coding lattice, $\Lambda_{\mathrm{s}}$ will be a subset of the fine lattice $\Lambda_{\mathrm{F}}$ restricted to a \textit{sum shaping region} $\mathcal{S}_{\mathrm{s}}$ such that all sum codewords $\lambda_{\mathrm{s}}$ fall within this region. In addition, given that $\Lambda_{\mathrm{s}}$ is obtained through a superposition of the originally transmitted codewords, its distribution is \textbf{no longer uniform}. 

Using the conventional CF decoder, the receiver decodes  $\lambda_{\mathrm{s}}=\sum_{i=1}^{N} \mathbf{x}_i$ using an MMSE scaling followed by minimum distance decoding to the nearest point in the fine lattice. Using this method, there are three fundamental limitations: $i)$ the effective noise $\mathbf{z}_{\mathrm{eq}}=\sum_{i=1}^{N} \left(1-\alpha\right)\mathbf{x}_i+\alpha\mathbf{z}$ is not Gaussian, $ii)$ the shaping constraint is disregarded, and $iii)$ the non uniform distribution of the sum codebook $\Lambda_{\mathrm{s}}$ is not taken into account. A main contribution of this work is the analysis in the following of the optimal MAP decoding approach that takes into consideration the above mentionned drawbacks of the conventional CF decoder. To the best of our knowledge, this is the first investigation of the MAP decoder for the CF protocol. We will be interested in decoding $\lambda_{\mathrm{s}}=\sum_{i=1}^{N} \mathbf{x}_i$ given that modulo-lattice operation is done separately and does not impact the decoding error. We will evaluate thereforre the error probability at the receiver as $\mathrm{P}_\mathrm{e}= \mathrm{Pr} \left( \hat{\lambda}_s \neq \lambda_s \right)$.
\subsection{MAP Decoder: Error Probability and Lattice Design Criterion}
Under the non-uniform distribution of the sum codebook, the optimal decoder that minimizes the probability of decoding error at the receiver is the \textit{maximum a posteriori} decoder given according to the following:
\begin{align}\label{map}
\hat{\lambda}_{\mathrm{map}} &= \argmax_{\lambda_{\mathrm{s}} \in \Lambda_{\mathrm{s}}}  p\left(\mathbf{\lambda_{\mathrm{s}}} | \mathbf{y} \right)= \argmax_{\lambda_{\mathrm{s}} \in \Lambda_{\mathrm{s}}} \left\lbrace p(\lambda_{\mathrm{s}}) \frac{1}{(\sigma\sqrt{2\pi})^{n}} \exp{\left(- \frac{\parallel \mathbf{y}-\lambda_{\mathrm{s}} \parallel^{2}}{2\sigma^{2}}\right)} \right\rbrace \notag \\
		      &= \argmin_{\lambda_{\mathrm{s}} \in \Lambda_{\mathrm{s}}} \left\lbrace -\ln{\left(p(\lambda_{\mathrm{s}})\right)} + \frac{\parallel \mathbf{y}-\lambda_{\mathrm{s}} \parallel^{2}}{2\sigma^{2}}\right\rbrace
\end{align}
Notice that the MAP decoder does not involve a scaling step like the conventional decoder keeping the channel noise Gaussian. A first contribution in this context consists in deriving in the following theorem a union bound estimate on the decoding error probability.
\begin{theorem}
Consider a nested lattice design $\Lambda=(\Lambda_{\mathrm{F}},\Lambda_{\mathrm{C}})$ and a receiver computing a noiseless sum of $N$ source codewords in a Gaussian MAC using the optimal maximum a posteriori decoder. Then the union bound estimate of the probability of decoding error is:
\begin{equation}\label{proba2}
\mathrm{P}_{\mathrm{e}}  \leq \frac{1}{2} \sum_{\lambda_\mathrm{s} \in \Lambda_{\mathrm{s}}}\hspace{0.25cm} \sum_{\hat{\lambda}_{\mathrm{s}} \in \Lambda_{\mathrm{s}} \setminus \lambda_{\mathrm{s}}} p(\lambda_{\mathrm{s}}) \mathrm{erfc}\left( \sqrt{A} + \frac{B}{\sqrt{A}} \right)
\end{equation}
where $A=\frac{d_{\mathrm{min}}^{2}}{8\sigma^{2}}$, $B=\frac{1}{4}\ln{\left(\frac{p(\lambda_{\mathrm{s}})}{p(\hat{\lambda}_{\mathrm{s}})}\right)}$ and $d_{\mathrm{min}}$ denotes the minimum distance of the fine lattice $\Lambda_{\mathrm{F}}$.
\end{theorem}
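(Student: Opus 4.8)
The plan is to run the classical union-bound argument, but with the decision regions of the MAP rule (\ref{map}) rather than those of minimum-distance decoding. First I would condition on the transmitted sum codeword and write
$\mathrm{P}_{\mathrm{e}} = \sum_{\lambda_{\mathrm{s}} \in \Lambda_{\mathrm{s}}} p(\lambda_{\mathrm{s}})\,\mathrm{Pr}\!\left(\hat{\lambda}_{\mathrm{s}} \neq \lambda_{\mathrm{s}} \mid \lambda_{\mathrm{s}}\ \text{sent}\right)$,
and then upper bound the conditional error probability by the sum of pairwise error probabilities $\mathrm{Pr}(\lambda_{\mathrm{s}} \to \hat{\lambda}_{\mathrm{s}})$ over all competitors $\hat{\lambda}_{\mathrm{s}} \in \Lambda_{\mathrm{s}} \setminus \{\lambda_{\mathrm{s}}\}$, where $\mathrm{Pr}(\lambda_{\mathrm{s}} \to \hat{\lambda}_{\mathrm{s}})$ is the probability that the MAP metric evaluated at $\hat{\lambda}_{\mathrm{s}}$ is strictly smaller than at $\lambda_{\mathrm{s}}$. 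Since $\Lambda_{\mathrm{s}}$ is confined to the shaping region $\mathcal{S}_{\mathrm{s}}$ this double sum is finite and the bound is well defined.

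The core computation is the pairwise error probability. Writing $\mathbf{y} = \lambda_{\mathrm{s}} + \mathbf{z}$ with $\mathbf{z} \sim \mathcal{N}(\mathbf{0}, \sigma^{2}\mathbf{I}_{n})$ and putting $\mathbf{e} = \hat{\lambda}_{\mathrm{s}} - \lambda_{\mathrm{s}}$, the event
$-\ln p(\hat{\lambda}_{\mathrm{s}}) + \|\mathbf{y}-\hat{\lambda}_{\mathrm{s}}\|^{2}/(2\sigma^{2}) < -\ln p(\lambda_{\mathrm{s}}) + \|\mathbf{y}-\lambda_{\mathrm{s}}\|^{2}/(2\sigma^{2})$
reduces, after the $\|\mathbf{z}\|^{2}$ terms cancel, to the half-space condition
$\mathbf{z}^{t}\mathbf{e} > \tfrac{1}{2}\|\mathbf{e}\|^{2} + \sigma^{2}\ln\!\big(p(\lambda_{\mathrm{s}})/p(\hat{\lambda}_{\mathrm{s}})\big)$
on the scalar Gaussian $\mathbf{z}^{t}\mathbf{e}\sim\mathcal{N}(0,\sigma^{2}\|\mathbf{e}\|^{2})$. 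Hence $\mathrm{Pr}(\lambda_{\mathrm{s}} \to \hat{\lambda}_{\mathrm{s}}) = \tfrac{1}{2}\mathrm{erfc}\!\big( \tfrac{\|\mathbf{e}\|}{2\sqrt{2}\,\sigma} + \tfrac{\sigma}{\sqrt{2}\,\|\mathbf{e}\|}\ln(p(\lambda_{\mathrm{s}})/p(\hat{\lambda}_{\mathrm{s}})) \big)$. The point to stress is that, in contrast with ML decoding, the boundary between the regions of $\lambda_{\mathrm{s}}$ and $\hat{\lambda}_{\mathrm{s}}$ is a hyperplane displaced off the perpendicular bisector by the log-prior offset, and this displacement is precisely what generates the additive $B/\sqrt{A}$ term.

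To obtain the stated form I would then replace the true pairwise distance $\|\mathbf{e}\|$ by the minimum distance $d_{\mathrm{min}}$ of $\Lambda_{\mathrm{F}}$: with $\|\mathbf{e}\| = d_{\mathrm{min}}$, $A=d_{\mathrm{min}}^{2}/(8\sigma^{2})$ and $B=\tfrac{1}{4}\ln(p(\lambda_{\mathrm{s}})/p(\hat{\lambda}_{\mathrm{s}}))$ one checks directly that $\tfrac{\|\mathbf{e}\|}{2\sqrt{2}\,\sigma}=\sqrt{A}$ and $\tfrac{\sigma}{\sqrt{2}\,\|\mathbf{e}\|}\ln(\cdot)=B/\sqrt{A}$, so the argument of $\mathrm{erfc}$ becomes $\sqrt{A}+B/\sqrt{A}$; reassembling the two nested sums with the weights $p(\lambda_{\mathrm{s}})$ and the leading factor $\tfrac{1}{2}$ yields (\ref{proba2}). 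The main obstacle — and the reason the result is a union bound \emph{estimate} and not a strict inequality — is exactly this last substitution: since $\mathrm{erfc}$ is monotone decreasing, collapsing every error event onto its nearest-neighbour distance is the standard high-SNR approximation in which the minimum-distance events dominate; I would make this step explicit, keep the offset $B$ pair-dependent inside the double sum, and remark that it is the non-uniform prior $p(\lambda_{\mathrm{s}})$ that distinguishes this estimate from the usual Gaussian-channel union bound.
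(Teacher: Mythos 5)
Your proposal is correct and follows essentially the same route as the paper: a union bound over pairwise error probabilities, each pairwise event reduced to a scalar Gaussian tail for the MAP metric difference (the prior-shifted hyperplane), conversion $\mathrm{Q}(x)=\tfrac{1}{2}\mathrm{erfc}(x/\sqrt{2})$, and then replacement of the pairwise distance by $d_{\mathrm{min}}$ using the monotonicity of $\mathrm{erfc}(x+\alpha/x)$. Your pairwise expression and the identification $\sqrt{A}+B/\sqrt{A}$ match the paper's computation exactly, so no gap to report.
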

\begin{proof}
The proof of our theorem is based on the pairwise error probability defined as the probability that the sum codeword 
$\lambda_{\mathrm{s}}$ has a larger MAP decoding metric in (\ref{map}) than $\hat{\lambda}_{\mathrm{s}}$ given that 
$\lambda_{\mathrm{s}}$ is transmitted. Its expression is formulated as follows
\begin{align}
\mathrm{Pr}(\lambda_{\mathrm{s}} \longrightarrow \hat{\lambda}_{\mathrm{s}}) &=  \footnotesize \mathrm{Pr} \left(  -\ln{\left(p(\hat{\lambda}_{\mathrm{s}})\right)} + \frac{\parallel \mathbf{y}-\hat{\lambda}_{\mathrm{s}} \parallel^{2}}{2\sigma^{2}} < -\ln{\left(p(\lambda_{\mathrm{s}})\right)} + \frac{\parallel \mathbf{y}-\lambda_{\mathrm{s}} \parallel^{2}}{2\sigma^{2}} \right) \notag \\
&= \mathrm{Pr} \left(  \ln{\left(\frac{p(\lambda_{\mathrm{s}})}{p(\hat{\lambda}_{\mathrm{s}})}\right)} +  \frac{\parallel \mathbf{y}-\hat{\lambda}_{\mathrm{s}} \parallel^{2}}{2\sigma^{2}}- \frac{\parallel \mathbf{y}-\lambda_{\mathrm{s}} \parallel^{2}}{2\sigma^{2}} < 0  \right) \notag \\
&= \mathrm{Pr} \left(2\sigma^{2}\ln{\left(\frac{p(\lambda_{\mathrm{s}})}{p(\hat{\lambda}_{\mathrm{s}})}\right)}+ \parallel \lambda_{\mathrm{s}}-\hat{\lambda}_{\mathrm{s}} \parallel^{2} + 2 \left\langle \lambda_{\mathrm{s}} -\hat{\lambda}_{\mathrm{s}},\mathbf{z} \right\rangle < 0  \right) \notag \\
&= \mathrm{Pr} \left( G < 0 \right)= \mathrm{Q}\left( \frac{\mu_{G}}{\sigma_{G}} \right)= \mathrm{Q}\left( \frac{\parallel \lambda_{\mathrm{s}} - \hat{\lambda}_{\mathrm{s}} \parallel}{2\sigma} + \frac{\sigma}{\parallel \lambda_{\mathrm{s}} - \hat{\lambda}_{\mathrm{s}} \parallel}\ln{\left(\frac{p(\lambda_{\mathrm{s}})}{p(\hat{\lambda}_{\mathrm{s}})}\right)} \right) \notag
\end{align}
Where $\mathrm{Q}(.)$ denotes the $\mathrm{Q}$ function and it is easy to prove that $$G=2\sigma^{2}\ln{\left(\frac{p(\lambda_{\mathrm{s}})}{p(\hat{\lambda}_{\mathrm{s}})}\right)}+ \parallel \lambda_{\mathrm{s}}-\hat{\lambda}_{\mathrm{s}} \parallel^{2} + 2<\lambda_{\mathrm{s}} -\hat{\lambda}_{\mathrm{s}},\mathbf{z}>$$
 is a random Gaussian variable of mean $\mu_{\mathrm{G}}$ and variance $\sigma_{\mathrm{G}}^{2}$ given by:
\begin{align}
\mu_{G} = \parallel \lambda_{\mathrm{s}} - \hat{\lambda}_{\mathrm{s}} \parallel^{2} + 4 \sigma^{2} \ln{\left(\frac{p(\lambda_{\mathrm{s}})}{p(\hat{\lambda}_{\mathrm{s}})}\right)} ~,~
\sigma_{G}^{2} = 4 \sigma^{2} \parallel \lambda_{\mathrm{s}} - \hat{\lambda}_{\mathrm{s}} \parallel ^{2}
\end{align}
Using the union bound, we get,
\begin{align}
\mathrm{P}_{\mathrm{e}} & \leq \sum_{\lambda_{\mathrm{s}} \in \Lambda_{\mathrm{s}}} p(\lambda_{\mathrm{s}}) \sum_{\hat{\lambda}_{\mathrm{s}} \in \Lambda_{\mathrm{s}} \setminus \lambda_{\mathrm{s}}} \mathrm{Pr}(\lambda_{\mathrm{s}} \longrightarrow \hat{\lambda}_{\mathrm{s}}) \notag \\
& \leq \sum_{\lambda_{\mathrm{s}} \in \Lambda_{\mathrm{s}}} \hspace{0.25cm} \sum_{\hat{\lambda}_{\mathrm{s}} \in \Lambda_{\mathrm{s}} \setminus \lambda_{\mathrm{s}}} p(\lambda_{\mathrm{s}}) \mathrm{Q}\left( \frac{\parallel \lambda_{\mathrm{s}} - \hat{\lambda}_{\mathrm{s}} \parallel}{2\sigma} + \frac{\sigma}{\parallel \lambda_{\mathrm{s}} - \hat{\lambda}_{\mathrm{s}} \parallel}\ln{\left(\frac{p(\lambda_{\mathrm{s}})}{p(\hat{\lambda}_{\mathrm{s}})}\right)} \right) \notag
\end{align}
 We can therefore, using the relation $\mathrm{Q}(x)=\frac{1}{2}\mathrm{erfc}(\frac{x}{\sqrt{2}})$, write:
\begin{equation}\label{proba1}
\mathrm{P}_{\mathrm{e}}  \leq \frac{1}{2} \sum_{\lambda_{\mathrm{s}} \in \Lambda_{\mathrm{s}}} \hspace{0.25cm} \sum_{\hat{\lambda}_{\mathrm{s}} \in \Lambda_{\mathrm{s}} \setminus \lambda_{\mathrm{s}}} p(\lambda_{\mathrm{s}}) \mathrm{erfc}\left( \frac{\parallel \lambda_{\mathrm{s}} - \hat{\lambda}_{\mathrm{s}} \parallel}{2\sqrt{2}\sigma} + \frac{\sigma}{\sqrt{2}\parallel \lambda_{\mathrm{s}} - \hat{\lambda}_{\mathrm{s}} \parallel}\ln{\left(\frac{p(\lambda_{\mathrm{s}})}{p(\hat{\lambda}_{\mathrm{s}})}\right)} \right) \notag
\end{equation}
The last step to prove our theorem is based on two facts: $i)$ for all $\lambda_{\mathrm{s}}, \hat{\lambda}_{\mathrm{s}} \in \Lambda_{\mathrm{s}}$, $\parallel \lambda_{\mathrm{s}} - \hat{\lambda}_{\mathrm{s}} \parallel \geq d_{\mathrm{min}}$. This inequality results from the linear structure and the geometrical symmetric properties of the fine lattice $\Lambda_{\mathrm{F}}$, and $ii)$ $\mathrm{erfc}(x+\frac{\alpha}{x}),~\alpha \in \mathbb{R}$ is a decreasing function with respect to $x$ \cite{Behnamfar03}.
The proof follows then by considering $A$ and $B$ as defined above. 
\end{proof}
 Given the derived upper bound, we propose a lattice design criterion as follows.
 \begin{proposition}
 Minimization of the error probability under MAP decoding requires to design nested lattices  $\Lambda=\left(\Lambda_{\mathrm{F}},\Lambda_{\mathrm{C}} \right)$ such that the minimum distance of the Fine lattice is maximized.
 \end{proposition}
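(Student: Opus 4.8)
The plan is to extract the criterion directly from the union bound estimate \eqref{proba2} of the preceding theorem, by viewing its right-hand side as a function of the single design parameter $d_{\mathrm{min}}=d_{\mathrm{min}}(\Lambda_{\mathrm{F}})$ while keeping the noise variance $\sigma^{2}$ fixed (it is a channel parameter) and keeping the a priori masses $p(\lambda_{\mathrm{s}})$ of the sum codebook fixed as well; the latter are governed by the coarse lattice and by the multiplicity structure of the superposition, so holding the coarse lattice and the sum-shaping region fixed isolates the effect of the fine lattice. First I would make the dependence on $d_{\mathrm{min}}$ explicit: since $A=\frac{d_{\mathrm{min}}^{2}}{8\sigma^{2}}$, we have $\sqrt{A}=\frac{d_{\mathrm{min}}}{2\sqrt{2}\,\sigma}$, which is strictly increasing in $d_{\mathrm{min}}$, whereas $B=\frac14\ln\!\left(p(\lambda_{\mathrm{s}})/p(\hat{\lambda}_{\mathrm{s}})\right)$ does not depend on $d_{\mathrm{min}}$. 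Hence each summand of \eqref{proba2} has the form $p(\lambda_{\mathrm{s}})\,\mathrm{erfc}\!\left(x+\frac{B}{x}\right)$ with $x=\sqrt{A}$ an increasing function of $d_{\mathrm{min}}$.

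Next I would reuse the monotonicity fact already invoked in the proof of the theorem, namely that $\mathrm{erfc}\!\left(x+\frac{\alpha}{x}\right)$ is a decreasing function of $x>0$ for every $\alpha\in\mathbb{R}$ \cite{Behnamfar03}. Applied with $\alpha=B$, this shows that every summand on the right of \eqref{proba2} is non-increasing in $d_{\mathrm{min}}$; since a finite sum of non-increasing functions is non-increasing, the whole upper bound on $\mathrm{P}_{\mathrm{e}}$ is non-increasing in $d_{\mathrm{min}}$. Consequently, among nested designs $\Lambda=(\Lambda_{\mathrm{F}},\Lambda_{\mathrm{C}})$ with the same coarse lattice and the same sum-codebook statistics, the design that maximizes the minimum distance of $\Lambda_{\mathrm{F}}$ minimizes the union bound estimate, which is exactly the claimed criterion.

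The only genuinely delicate point I expect is the clean separation between the dependence on $d_{\mathrm{min}}$ and the dependence on the a priori probabilities $p(\lambda_{\mathrm{s}})$: altering the fine lattice can in principle also reshape the distribution of the sum codebook, and hence the constants $B$. I would therefore present the proposition as a criterion obtained by optimizing the dominant term of the bound --- $d_{\mathrm{min}}$ enters through $\sqrt{A}$ and thus essentially linearly, whereas the sum-codebook statistics enter only logarithmically, inside $B$ --- so that the term-by-term monotonicity in $d_{\mathrm{min}}$ dictates the design rule. Beyond the elementary derivative check underlying the cited monotonicity of $\mathrm{erfc}(x+\alpha/x)$, no further computation is needed.
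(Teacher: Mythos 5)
Your proposal is correct and follows essentially the same route as the paper: both arguments read the design criterion off the union bound of the preceding theorem, using the monotonicity of $\mathrm{erfc}\left(x+\frac{\alpha}{x}\right)$ in $x$ (cited to the same reference) to conclude that the bound decreases as $A$, and hence $d_{\mathrm{min}}$, grows. Your explicit caveat that changing $\Lambda_{\mathrm{F}}$ could in principle alter the a priori masses $p(\lambda_{\mathrm{s}})$ and thus $B$ is a fair point the paper leaves implicit, but it does not change the substance of the argument.
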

 \begin{proof}
 The upper bound on the error probability is a strictly decreasing function of $A$ \cite{Behnamfar03}, thus a decreasing function of the minimum distance of the lattice $\Lambda_{\mathrm{F}}$. Then in order to make the error probability small, the coding lattice $\Lambda_{\mathrm{F}}$ has to have a large minimum distance $d_{\mathrm{min}}$.
 \end{proof}
 The construction of such good codes is out of the scope of this work. Even though, we point out that for lattices built using Construction A \cite{ErezZamir} over linear codes, this criterion requires to design linear codes with minimum euclidean weights.
\subsection{Practical MAP Decoding Algorithms} 
We aim in this section to develop practical decoding algorithms that allow to reliably find the optimal MAP estimate of the optimization problem in (\ref{map}). For this purpose, we study first the statistical distribution of the sum codewords.

The original codewords are drawn uniformally and independently from the nested lattice code, they are modeled by uniform random variables of zero-mean ($\mu_{\mathbf{x}}=0$) and variance $\sigma_{\mathbf{x}}^{2}=\frac{1}{n} \mathbb{E}\left( \parallel \mathbf{x}_i \parallel ^{2} \right) \leq P$ for $i=1,...,N$.
 Consider now the sum codewords $\lambda_{\mathrm{s}}=\sum_{i=1}^{N} \mathbf{x}_i$ obtained through the superposition of the vectors sent by the sources. Given the uniform distribution of the original codewords, The \textit{Central Limit Theorem} states that $\lambda_{\mathrm{s}}$ is a random variable of mean $\mu_{\mathrm{s}}=N\mu_{\mathbf{x}}=0$ and variance $\sigma_{\mathrm{s}}^{2}=N\sigma_{\mathbf{x}}^{2}$. Particularly, for increasing number of sources $N$, the sum codewords converge to the normal distribution $\mathcal{N}\left(\mu_{\mathrm{s}},\sigma_{\mathrm{s}}^{2}\mathbf{I}_n \right)$. In order to be able to use this result to approximate the vectors $\lambda_{\mathrm{s}}$ by random Gaussian variables, we need in addition to take into consideration the fact that the sum codewords are discrete and correspond to lattice points. For this purpose we introduce the lattice Gaussian distributions. This tool arises in several problems in coding theory \cite{Forney00}, mathematics \cite{Banaszczyk93} and cryptography \cite{Micciancio04}.
 
  Let $f_{\sigma_{\mathrm{s}}}(\mathbf{x})$ denote the Gaussian distribution of variance $\sigma_{\mathrm{s}}^{2}$ centered at the zero vector such that for $\sigma_{\mathrm{s}} > 0$ and all $\mathbf{x} \in \mathbb{R}^{n}$:
   $$f_{\sigma_{\mathrm{s}}}(\mathbf{x}) = \frac{1}{\left(\sqrt{2\pi}\sigma_{\mathrm{s}} \right)^{n}} e^{-\frac{\parallel \mathbf{x}\parallel^{2} }{2\sigma_{\mathrm{s}}^{2}}}$$
    Consider also the $\Lambda_{\mathrm{F}}-$periodic function defined by:
$$f_{\sigma_{\mathrm{s}}}(\Lambda_{\mathrm{F}}) = \sum_{\lambda_{\mathrm{s}} \in \Lambda_{\mathrm{F}}} f_{\sigma_{\mathrm{s}}}(\lambda_{\mathrm{s}}) = \frac{1}{\left(\sqrt{2\pi}\sigma_{\mathrm{s}} \right)^{n}} \sum_{\lambda_{\mathrm{s}} \in \Lambda_{\mathrm{F}}} e^{-\frac{\parallel \lambda_{\mathrm{s}} \parallel^{2} }{2\sigma_{\mathrm{s}}^{2}}}$$
    Then the sum codewords can be modeled by the discrete Gaussian distributions over $\Lambda_{\mathrm{F}}$ centered at the zero vector according to: $p(\lambda_{\mathrm{s}})= \frac{f_{\sigma_{\mathrm{s}}}(\lambda_{\mathrm{s}})}{f_{\sigma_{\mathrm{s}}}(\Lambda_{\mathrm{F}})}$. 
    \begin{figure}[htp]
        \centering
        \subfigure[\scriptsize Histogram for N=2.]{\includegraphics[height=7cm,width=6.5cm]
        {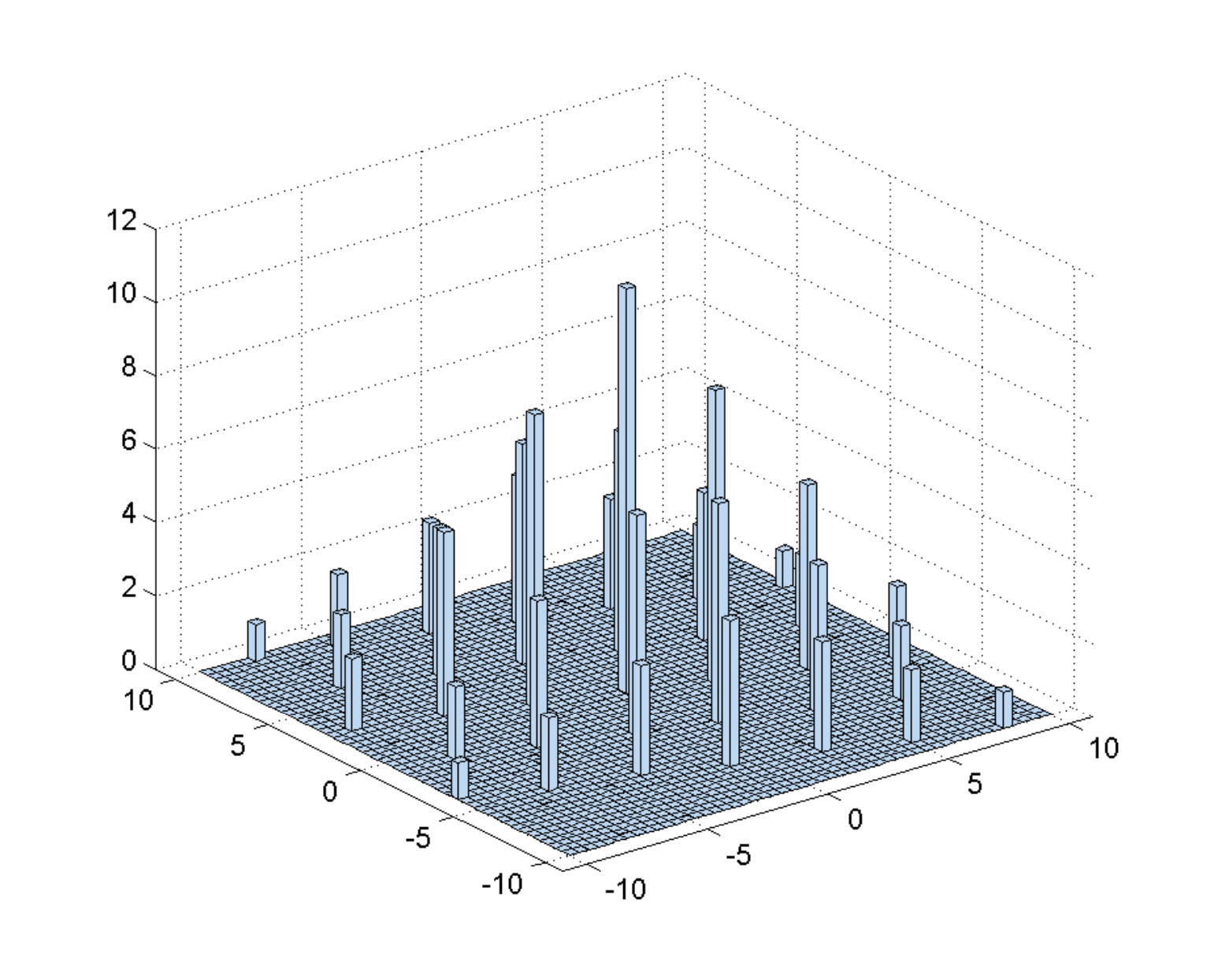}} \
        \subfigure[\scriptsize Histogram for N=5.]{\includegraphics[height=7cm,width=6.5cm]{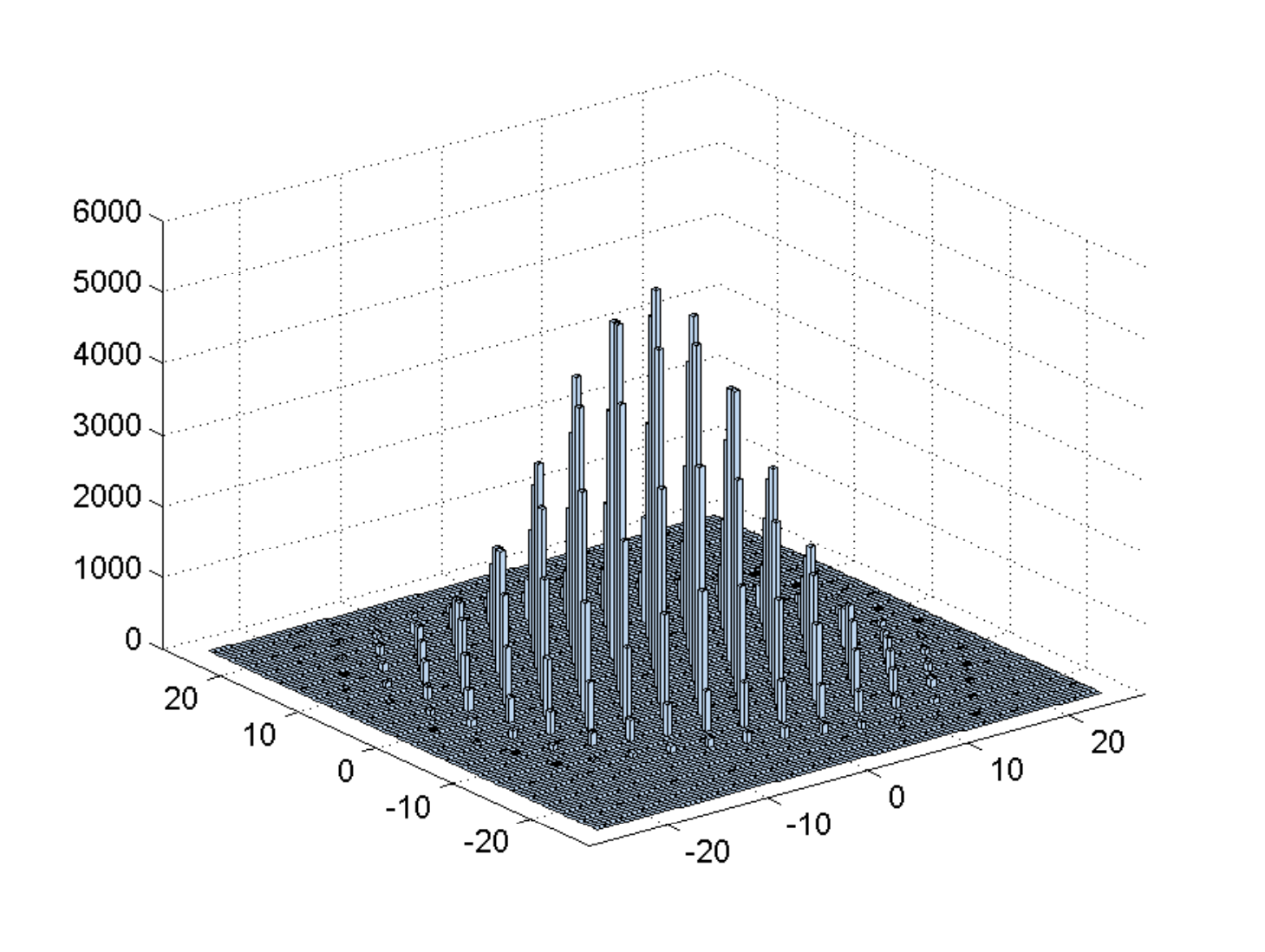}}
        \caption{Histogram of the codebook induced by the sum of codewords.}\label{nested}
        \end{figure}\\   
    We illustrate in Fig.\ref{nested} two examples of the statistical distribution of the sum codebook resulting from the superposition of 2-dimensional lattice codewords for the cases of $N=2$ and $N=5$ considering a Fine lattice $\Lambda_{\mathrm{F}}$ of a generator matrix $\mathbf{M}=\left[ \begin{array}{cc}
         2 & 3 \\ 3 & -1 \end{array} \right]$ and the coarse lattice $\Lambda_{\mathrm{C}}=11\mathbb{Z}^{2} (k=1,p=11)$. These examples show that the lattice Gaussian distribution fits our settings. As a proof of concept, we will show by numerical results that this model is well justified in the context of lattice network coding even for low number of sources.
            
Using this Gaussian distribution, the MAP decoding rule in (\ref{map}) is equivalent to:  
\begin{equation}\label{map1}
\hat{\lambda}_{\mathrm{map}}= \argmin_{\lambda_{\mathrm{s}} \in \Lambda_{\mathrm{s}}} \left\lbrace \ln{\left(f_{\sigma_{\mathrm{s}}}(\Lambda_{\mathrm{F}}) \right)}+ n\ln{(\sigma_{\mathrm{s}}\sqrt{2\pi})} +\frac{\Vert \lambda_{\mathrm{s}} \Vert ^{2}}{2\sigma_{\mathrm{s}}^{2}}+\frac{\Vert \mathbf{y}-\lambda_{\mathrm{s}} \Vert ^{2}}{2\sigma^{2}}  \right\rbrace \notag 
\end{equation}
The first and second terms in this metric are independent of the variable $\lambda_{\mathrm{s}}$, they can be disregarded in the optimization over $\lambda_{\mathrm{s}}$. Then we obtain our novel decoding metric given by: 
\begin{align}\label{mapeq}
\hat{\lambda}_{\mathrm{map}} &= \argmin_{\lambda_{\mathrm{s}} \in \Lambda_{\mathrm{s}}} \left\lbrace \parallel \mathbf{y}- \lambda_{\mathrm{s}} \parallel ^{2} + \beta^{2} \parallel \lambda_{\mathrm{s}} \parallel ^{2}  \right\rbrace   
\end{align}
where $\beta=\frac{\sigma}{\sigma_{\mathrm{s}}}$. Using this new metric, we show in Proposition.\ref{mapmeth1} that MAP decoding reduces to solve for a closest vector problem. 
\begin{proposition}\label{mapmeth1}
The MAP decoding metric in (\ref{mapeq}) is equivalent to 
find the closest vector in the lattice $\Lambda_{\mathrm{aug}}$ of generator matrix $\mathbf{M}_{\mathrm{aug}}=\left[ \mathbf{M}~\beta \mathbf{M}\right]^{t} \in \mathbb{R}^{2n \times n}$ to the vector $\mathbf{y}_{\mathrm{aug}}=\left[\mathbf{y}~  \mathbf{0}_{n}\right]^{t}$ according to the following metric:
\begin{align}
\hat{\lambda}_{\mathrm{map}} &= \argmin_{
\mathbf{x}_{\mathrm{aug}} \in \Lambda_{\mathrm{aug}}/\mathbf{x}_{\mathrm{aug}}=\mathbf{M}_{\mathrm{aug}} \lambda_{\mathrm{s}}} \parallel \mathbf{y}_{\mathrm{aug}} - \mathbf{x}_{\mathrm{aug}} \parallel^{2}  
\end{align}
\end{proposition}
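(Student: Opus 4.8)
The plan is to prove this by a pure change of variables, identifying the MAP objective in (\ref{mapeq}) term by term with the squared Euclidean distance of a closest vector problem on $\Lambda_{\mathrm{aug}}$; the earlier reduction of the MAP metric to (\ref{mapeq}) has already removed all probabilistic content, so nothing analytic remains. First I would use that $\Lambda_{\mathrm{s}} \subseteq \Lambda_{\mathrm{F}}$, so every candidate sum codeword has a unique representation $\lambda_{\mathrm{s}} = \mathbf{M}\mathbf{s}$ with $\mathbf{s} \in \mathbb{Z}^{n}$, the admissible $\mathbf{s}$ being exactly those whose image falls in the sum shaping region $\mathcal{S}_{\mathrm{s}}$. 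Applying the augmented generator to the same integer coordinate vector gives
\[
\mathbf{M}_{\mathrm{aug}}\mathbf{s} = \left[ \mathbf{M}~ \beta\mathbf{M}\right]^{t}\mathbf{s} = \left[ (\mathbf{M}\mathbf{s})^{t}~ \beta(\mathbf{M}\mathbf{s})^{t}\right]^{t} = \left[ \lambda_{\mathrm{s}}^{t}~ \beta\lambda_{\mathrm{s}}^{t}\right]^{t},
\]
so that $\mathbf{x}_{\mathrm{aug}}$ ranges precisely over the image of $\Lambda_{\mathrm{s}}$ under $\mathbf{M}_{\mathrm{aug}}$ as $\lambda_{\mathrm{s}}$ ranges over $\Lambda_{\mathrm{s}}$ — which is what the constraint $\mathbf{x}_{\mathrm{aug}}=\mathbf{M}_{\mathrm{aug}}\lambda_{\mathrm{s}}$ in the statement is meant to encode (with the mild abuse that $\lambda_{\mathrm{s}}$ there stands for its integer coordinate vector $\mathbf{s}$).

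Next I would expand the distance objective blockwise over the two $n$-dimensional components of $\mathbf{y}_{\mathrm{aug}}=\left[\mathbf{y}~\mathbf{0}_{n}\right]^{t}$ and $\mathbf{x}_{\mathrm{aug}}=\left[\lambda_{\mathrm{s}}^{t}~\beta\lambda_{\mathrm{s}}^{t}\right]^{t}$:
\[
\parallel \mathbf{y}_{\mathrm{aug}} - \mathbf{x}_{\mathrm{aug}} \parallel^{2} = \parallel \mathbf{y} - \lambda_{\mathrm{s}} \parallel^{2} + \parallel \mathbf{0}_{n} - \beta\lambda_{\mathrm{s}} \parallel^{2} = \parallel \mathbf{y} - \lambda_{\mathrm{s}} \parallel^{2} + \beta^{2}\parallel \lambda_{\mathrm{s}} \parallel^{2},
\]
which is exactly the bracketed quantity minimized in (\ref{mapeq}). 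Since the top block of $\mathbf{M}_{\mathrm{aug}}$ is the invertible matrix $\mathbf{M}$, the map $\mathbf{s}\mapsto\mathbf{M}_{\mathrm{aug}}\mathbf{s}$ is injective, hence a bijection between the feasible points of the two problems under which the two objectives take equal values at corresponding points; taking the $\argmin$ on both sides then yields the stated equivalence.

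The only delicate point — and the one I would flag as the main obstacle — is the bookkeeping of the sum shaping constraint: the closest-vector search is \emph{not} over all of $\Lambda_{\mathrm{aug}}$ but over the finite subset that is the image of $\Lambda_{\mathrm{s}}$, and one must check that this restriction transfers faithfully across the correspondence. This follows because the correspondence was built as a bijection on feasible sets in the first place, but it is worth making explicit so the reader sees that the reformulation is a genuine constrained CVP, solvable by the usual sphere-decoding / lattice-reduction enumeration restricted to the admissible coordinates, rather than an unconstrained one. Beyond that, the proof is complete as soon as the blockwise norm identity above is written down.
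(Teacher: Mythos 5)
Your proof is correct and follows essentially the same route as the paper's: both arguments rewrite the metric of (\ref{mapeq}) as the blockwise squared norm $\parallel \mathbf{y}_{\mathrm{aug}} - \left[\lambda_{\mathrm{s}}^{t}~\beta\lambda_{\mathrm{s}}^{t}\right]^{t}\parallel^{2}$, substitute the integer coordinates $\lambda_{\mathrm{s}}=\mathbf{M}\mathbf{u}$ subject to the shaping set, and identify the result as a constrained closest vector problem in the lattice generated by $\mathbf{M}_{\mathrm{aug}}$. Your explicit remarks on the bijection of feasible sets and on the notational abuse $\mathbf{x}_{\mathrm{aug}}=\mathbf{M}_{\mathrm{aug}}\lambda_{\mathrm{s}}$ (which should read $\mathbf{M}_{\mathrm{aug}}\mathbf{u}$) are consistent with how the paper handles these points, including its use of a sphere decoder modified for the shaping constraint.
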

\begin{proof}
The decoding metric in (\ref{mapeq}) can be written as:
\begin{align}\label{mapeq1}
\hat{\lambda}_{\mathrm{map}} &= \argmin_{\lambda_{\mathrm{s}} \in \Lambda_{\mathrm{s}}} \left\lbrace \left\|  \left[ \begin{array}{c}
\mathbf{y} \\ \mathbf{0}_{n}
\end{array}\right] -  \left[ \begin{array}{c}
\lambda_{\mathrm{s}} \\ \beta \lambda_{\mathrm{s}}
\end{array}\right]  \right\| ^{2} \right\rbrace= \argmin_{\lambda_{\mathrm{s}} \in \Lambda_{\mathrm{s}}} \parallel \mathbf{y}_{\mathrm{aug}} - \mathbf{I}_{\mathrm{aug}} \lambda_{\mathrm{s}} \parallel^{2}  
\end{align}
where $\mathbf{I}_{\mathrm{aug}}=\left[ \mathbf{I}_{n}~\beta \mathbf{I}_{n} \right]^{t} \in \mathbb{R}^{2n \times n}$ is a full rank matrix. On the other hand, given that the sum codewords belong to the fine lattice according to the shaping region $\mathcal{S}_{\mathrm{s}}$, any codeword $\lambda_{\mathrm{s}}$ can be written in the form $\lambda_{\mathrm{s}}=\mathbf{M}\mathbf{u}$ where $\mathbf{u} \in \mathcal{A}_{\mathrm{s}} \subset \mathbb{Z}^{n}$ and $\mathcal{A}_{\mathrm{s}}$ translates the shaping constraint imposed by $\mathcal{S}_{\mathrm{s}}$ and can be deduced from the shaping boundaries limited by the transmission power constraint $P$. Consequently the optimization problem in (\ref{mapeq1}) is equivalent to solving
\begin{align}\label{map4}
\hat{\lambda}_{\mathrm{map}} &= \argmin_{\mathbf{u} \in \mathcal{A}_{\mathrm{s}} / \lambda_{\mathrm{s}}=\mathbf{M}\mathbf{u} }   \parallel \mathbf{y}_{\mathrm{aug}} - \mathbf{I}_{\mathrm{aug}} \mathbf{M} \mathbf{u} \parallel^{2}= \argmin_{\mathbf{u} \in \mathcal{A}_{\mathrm{s}} / \lambda_{\mathrm{s}}=\mathbf{M}\mathbf{u} }  \parallel \mathbf{y}_{\mathrm{aug}} - \mathbf{M}_{\mathrm{aug}} \mathbf{u} \parallel^{2} 
\end{align}
$\mathbf{M}_{\mathrm{aug}}$ is a full rank matrix and $\mathbf{u}$ is an integer vector, then solving (\ref{map4}) consists in finding the closest vector $\mathbf{x}_{\mathrm{aug}}=\mathbf{M}_{\mathrm{aug}}\mathbf{u}$ to $\mathbf{y}_{\mathrm{aug}}$ in the $n-$dimensional lattice $\Lambda_{\mathrm{aug}}$ of a generated matrix $\mathbf{M}_{\mathrm{aug}}$. After finding the optimal integer vector $\mathbf{u}_{\mathrm{opt}}$ that minimizes the metric in (\ref{map4}), the optimal MAP estimate is deduced by $\hat{\lambda}_{\mathrm{map}}=\mathbf{M}\mathbf{u}_{\mathrm{opt}}$. 
\end{proof}
In our implementation, we use a modified version of the sphere decoder to solve this closest vector problem taking into account the shaping constraint.
\begin{remark}\label{rmqmap}
The MAP decoding metric in (\ref{mapeq}) involves two terms each one of them is given by an Euclidean distance. When the first term is dominant, which is the case when $\beta^{2}=\frac{\sigma^{2}}{\sigma_{\mathrm{s}}^{2}}= \frac{\sigma^{2}}{N\sigma_{\mathbf{x}}^{2}} \ll 1$, the MAP decoding rule reduces to ML decoding (which is equivalent to minimum distance decoding in this case since we don't perform a scaling step). Given that $\sigma_{\mathbf{x}}^{2}$ depends on the power constraint $P$, we deduce that this case of figure is likely to happen either at high Signal-to-Noise Ratio or when $N\sigma_{\mathbf{x}}^{2}$ is sufficiently higher than the noise variance $\sigma^{2}$. We expect then that the MAP decoding and the conventional decoder achieve similar performance at high SNR range. Adversely, at the low and moderate SNR regime and when the product $N\sigma_{\mathbf{x}}^{2}$ is small, the second term in the decoding metric applies an incremental constraint that considers the non-uniform distribution of the sum codewords in $\Lambda_{\mathrm{s}}$ which is not taken into account under the conventional decoder. In this case, we expect that the MAP decoder outperforms the minimum distance decoding-based one.
\end{remark}
We provide in the following proposition an equivalent formulation of the MAP decoding metric.
\begin{proposition}\label{mapmeth2}The MAP decoding metric in (\ref{mapeq}) is equivalent to MMSE-GDFE preprocessed minimum Euclidean distance decoding according to the metric:
\begin{align}\label{mmsegdfe}
\hat{\lambda}_{\mathrm{map}} &= \argmin_{\lambda_{\mathrm{s}} \in \Lambda_{\mathrm{s}}}  \parallel \mathbf{F} \mathbf{y}-\mathbf{B} \lambda_{\mathrm{s}} \parallel ^{2}   
\end{align}
$\mathbf{F} \in \mathbb{R}^{n \times n}$ and $\mathbf{B} \in \mathbb{R}^{n \times n}$ denote respectively the forward and backward filters of the MMSE-GDFE preprocessing for the channel $\mathbf{y}=\lambda_{\mathrm{s}}+\mathbf{z}$ such that $\mathbf{B}^{t}\mathbf{B}=\left(1+\beta^{2}\right)\mathbf{I}_n$ and $\mathbf{F}^{t}\mathbf{B}=\mathbf{I}_n$.
\end{proposition}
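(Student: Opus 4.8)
\emph{Proof proposal.} The plan is to expand both quadratic metrics, isolate the terms that actually depend on $\lambda_{\mathrm{s}}$, and show that these coincide once the two defining identities $\mathbf{F}^{t}\mathbf{B}=\mathbf{I}_n$ and $\mathbf{B}^{t}\mathbf{B}=(1+\beta^{2})\mathbf{I}_n$ are invoked; the remaining terms are constant in $\lambda_{\mathrm{s}}$ and hence immaterial to the minimization. This reduces the proposition to a direct computation plus a check that such filters exist.

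First I would expand the metric in (\ref{mapeq}): $\parallel \mathbf{y}-\lambda_{\mathrm{s}} \parallel^{2}+\beta^{2}\parallel \lambda_{\mathrm{s}} \parallel^{2}=\parallel \mathbf{y} \parallel^{2}-2\langle \mathbf{y},\lambda_{\mathrm{s}}\rangle+(1+\beta^{2})\parallel \lambda_{\mathrm{s}} \parallel^{2}$. Since $\parallel \mathbf{y} \parallel^{2}$ does not depend on $\lambda_{\mathrm{s}}$, problem (\ref{mapeq}) is equivalent to minimizing $f(\lambda_{\mathrm{s}})=-2\langle \mathbf{y},\lambda_{\mathrm{s}}\rangle+(1+\beta^{2})\parallel \lambda_{\mathrm{s}} \parallel^{2}$. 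Next I would expand the candidate metric in (\ref{mmsegdfe}): $\parallel \mathbf{F}\mathbf{y}-\mathbf{B}\lambda_{\mathrm{s}} \parallel^{2}=\parallel \mathbf{F}\mathbf{y} \parallel^{2}-2\,\mathbf{y}^{t}\mathbf{F}^{t}\mathbf{B}\lambda_{\mathrm{s}}+\lambda_{\mathrm{s}}^{t}\mathbf{B}^{t}\mathbf{B}\lambda_{\mathrm{s}}$. Substituting $\mathbf{F}^{t}\mathbf{B}=\mathbf{I}_n$ and $\mathbf{B}^{t}\mathbf{B}=(1+\beta^{2})\mathbf{I}_n$ turns this into $\parallel \mathbf{F}\mathbf{y} \parallel^{2}+f(\lambda_{\mathrm{s}})$, whose $\lambda_{\mathrm{s}}$-dependent part is exactly $f$. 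Hence the two $\argmin$ problems have the same solution set, which proves the equivalence.

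It then remains to show that filters obeying the two identities exist and coincide with the MMSE-GDFE forward/backward filters of the channel $\mathbf{y}=\lambda_{\mathrm{s}}+\mathbf{z}$. Reusing the augmentation of Proposition \ref{mapmeth1}, write $\mathbf{I}_{\mathrm{aug}}=[\mathbf{I}_n~\beta\mathbf{I}_n]^{t}$, so that $\mathbf{I}_{\mathrm{aug}}^{t}\mathbf{I}_{\mathrm{aug}}=(1+\beta^{2})\mathbf{I}_n$; a thin QR factorization $\mathbf{I}_{\mathrm{aug}}=\mathbf{Q}\mathbf{R}$ with $\mathbf{Q}^{t}\mathbf{Q}=\mathbf{I}_n$ gives $\mathbf{R}^{t}\mathbf{R}=(1+\beta^{2})\mathbf{I}_n$, and taking $\mathbf{B}=\mathbf{R}$ together with $\mathbf{F}=\mathbf{B}^{-t}$ (so that $\mathbf{F}^{t}\mathbf{B}=\mathbf{I}_n$) yields the required filters; a concrete valid choice is simply $\mathbf{B}=\sqrt{1+\beta^{2}}\,\mathbf{I}_n$ and $\mathbf{F}=\tfrac{1}{\sqrt{1+\beta^{2}}}\mathbf{I}_n$, with $\beta=\sigma/\sigma_{\mathrm{s}}$.

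The one delicate point — the step I would spend the most care on — is the identification with the \emph{standard} MMSE-GDFE filters, which are determined only up to a common left orthogonal transformation: one must verify that the filters produced by the usual MMSE-GDFE construction applied to $\mathbf{y}=\lambda_{\mathrm{s}}+\mathbf{z}$, with the noise-to-signal parameter set by $\beta=\sigma/\sigma_{\mathrm{s}}$ coming from the lattice-Gaussian variance $\sigma_{\mathrm{s}}^{2}=N\sigma_{\mathbf{x}}^{2}$, indeed satisfy $\mathbf{F}^{t}\mathbf{B}=\mathbf{I}_n$ and $\mathbf{B}^{t}\mathbf{B}=(1+\beta^{2})\mathbf{I}_n$. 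Once that bookkeeping is settled, the proposition follows immediately from the two norm expansions above.
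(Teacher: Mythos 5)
Your norm-expansion argument for the equivalence of the two minimization problems is correct and is essentially the paper's own computation: the paper completes the square, writing the MAP metric as $\parallel \mathbf{F}\mathbf{y}-\mathbf{B}\lambda_{\mathrm{s}}\parallel^{2}+\mathbf{y}^{t}(\mathbf{I}_n-\mathbf{F}^{t}\mathbf{F})\mathbf{y}$ by inserting $\mathbf{B}^{t}\mathbf{B}=(1+\beta^{2})\mathbf{I}_n$ and $\mathbf{F}^{t}\mathbf{B}=\mathbf{I}_n$, which is the same bookkeeping as your comparison of the $\lambda_{\mathrm{s}}$-dependent parts; dropping the $\lambda_{\mathrm{s}}$-independent remainder is common to both.

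There is, however, a genuine gap exactly where you stop. The proposition does not merely assert that any pair $(\mathbf{F},\mathbf{B})$ obeying the two identities yields an equivalent metric; it asserts that $\mathbf{F}$ and $\mathbf{B}$ are the forward and backward MMSE-GDFE filters of the channel $\mathbf{y}=\lambda_{\mathrm{s}}+\mathbf{z}$. You name this identification as ``the one delicate point'' and then leave it as unverified bookkeeping, so your proposal does not actually establish it. This is precisely the part the paper treats as the substantive remaining step and proves in the appendix: there, the MMSE-GDFE filters are characterized by minimizing the variance of the effective noise $\mathbf{w}=(\mathbf{F}_{\mathrm{m}}-\mathbf{B}_{\mathrm{m}})\lambda_{\mathrm{s}}+\mathbf{F}_{\mathrm{m}}\mathbf{z}$, giving $\mathbf{F}_{\mathrm{m}}=\tfrac{1}{1+\beta^{2}}\mathbf{B}_{\mathrm{m}}$ and minimum variance $\varepsilon_{\min}=\sigma_{\mathrm{s}}^{2}\beta^{2}$, and then it is checked that any $(\mathbf{F},\mathbf{B})$ satisfying $\mathbf{F}^{t}\mathbf{B}=\mathbf{I}_n$ and $\mathbf{B}^{t}\mathbf{B}=(1+\beta^{2})\mathbf{I}_n$ attains this same variance, hence coincides with the MMSE-GDFE pair. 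Your QR factorization of the augmented matrix $[\mathbf{I}_n~\beta\mathbf{I}_n]^{t}$ with $\beta=\sigma/\sigma_{\mathrm{s}}$ is in fact the standard recipe that produces these filters for the unit channel, so the gap is fillable in a few lines --- either by making that identification explicit or by reproducing the paper's variance-minimization argument --- but as written the MMSE-GDFE claim is asserted rather than proved.
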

\begin{proof}
Let $N(\lambda_{\mathrm{s}})$ denote the metric we aim to minimize in (\ref{mapeq}), we have the following:
\begin{align}
N(\lambda_{\mathrm{s}}) &= \parallel \mathbf{y}- \lambda_{\mathrm{s}} \parallel ^{2} + \beta^{2} \parallel \lambda_{\mathrm{s}} \parallel ^{2}= \mathbf{y}^{t}\mathbf{y}-2\mathbf{y}^{t}\lambda_{\mathrm{s}}+ \lambda_{\mathrm{s}}^{t}\lambda_{\mathrm{s}}+\beta^{2}\lambda_{\mathrm{s}}^{t}\lambda_{\mathrm{s}} \notag \\
&= \left(1+\beta^{2} \right)\lambda_{\mathrm{s}}^{t}\lambda_{\mathrm{s}}+\mathbf{y}^{t}\mathbf{y}-2\mathbf{y}^{t}\lambda_{\mathrm{s}}= \lambda_{\mathrm{s}}^{t}\mathbf{B}^{t}\mathbf{B}\lambda_{\mathrm{s}} + \mathbf{y}^{t}\mathbf{y}-2\mathbf{y}^{t}\mathbf{F}^{t}\mathbf{B}\lambda_{\mathrm{s}} \notag \\
&= \underbrace{\lambda_{\mathrm{s}}^{t}\mathbf{B}^{t}\mathbf{B}\lambda_{\mathrm{s}}+ \mathbf{y}^{t}\mathbf{F}^{t}\mathbf{F}\mathbf{y}-2\mathbf{y}^{t}\mathbf{F}^{t}\mathbf{B}\lambda_{\mathrm{s}}}_{\parallel \mathbf{F} \mathbf{y}-\mathbf{B} \lambda_{\mathrm{s}} \parallel ^{2}} + \underbrace{\mathbf{y}^{t} \left( \mathbf{I}_n - \mathbf{F}^{t}\mathbf{F} \right) \mathbf{y}}_{\Gamma(\mathbf{y})}
\end{align}
where $\mathbf{F} \in \mathbb{R}^{n \times n}$ and $\mathbf{B} \in \mathbb{R}^{n \times n}$ are chosen such that: $\mathbf{B}^{t}\mathbf{B}=\left(1+\beta^{2}\right)\mathbf{I}_n$ and $\mathbf{F}^{t}\mathbf{B}=\mathbf{I}_n$. Given that $\Gamma(\mathbf{y}) > 0$ and independent of $\lambda_{\mathrm{s}}$, minimization of $N(\lambda_{\mathrm{s}})$ is equivalent to minimize $\parallel \mathbf{F} \mathbf{y}-\mathbf{B} \lambda_{\mathrm{s}} \parallel ^{2}$. The last piece to our proof is to show that the matrices $\mathbf{F}$ and $\mathbf{B}$ correspond to the filters of the MMSE-GDFE preprocessing in the system $\mathbf{y}=\lambda_{\mathrm{s}}+\mathbf{z}$ of input $\lambda_{\mathrm{s}}$ and AWGN $\mathbf{z}$. This proof is provided in Appendix \ref{appendix:mmsegdfe}.
\end{proof}
In order to find the MAP estimate according to the
decoding metric in (\ref{mmsegdfe}), the receiver first performs MMSE-GDFE preprocessing, then performs minimum Euclidean distance decoding to find the nearest point to $\mathbf{F}\mathbf{y}$ in the lattice of generator matrix $\mathbf{B}\mathbf{M}$ according to the shaping constraint imposed by the subset $\Lambda_{\mathrm{s}}$. 

\subsection{Numerical Results}
We evaluate in this subsection the performance of the conventional decoder (based on MMSE scaling and minimum distance decoding) and the proposed MAP decoding algorithm implementing a modified sphere decoder. In addition, in order to validate the Gaussianity law assumption we considered to derive our MAP decoding metric, we include a naive exhaustive search to solve (\ref{map}). Using this approach, no assumptions on the sum codebook distribution is considered. The receiver, given the number of sources and the original codebook associated to the nested lattice $\Lambda$, derives the statistics of the sum codebook to compute the corresponding values of $p(\lambda_{\mathrm{s}})$ for all codewords $\lambda_{\mathrm{s}} \in \Lambda_{\mathrm{s}}$, then, it exhaustively seeks the codeword which maximizes the decoding metric in (\ref{map}). We study in our analysis two lattice examples as described below.\\
 \begin{figure}[h]
   \centering
   \includegraphics[height=9cm,width=11.5cm]{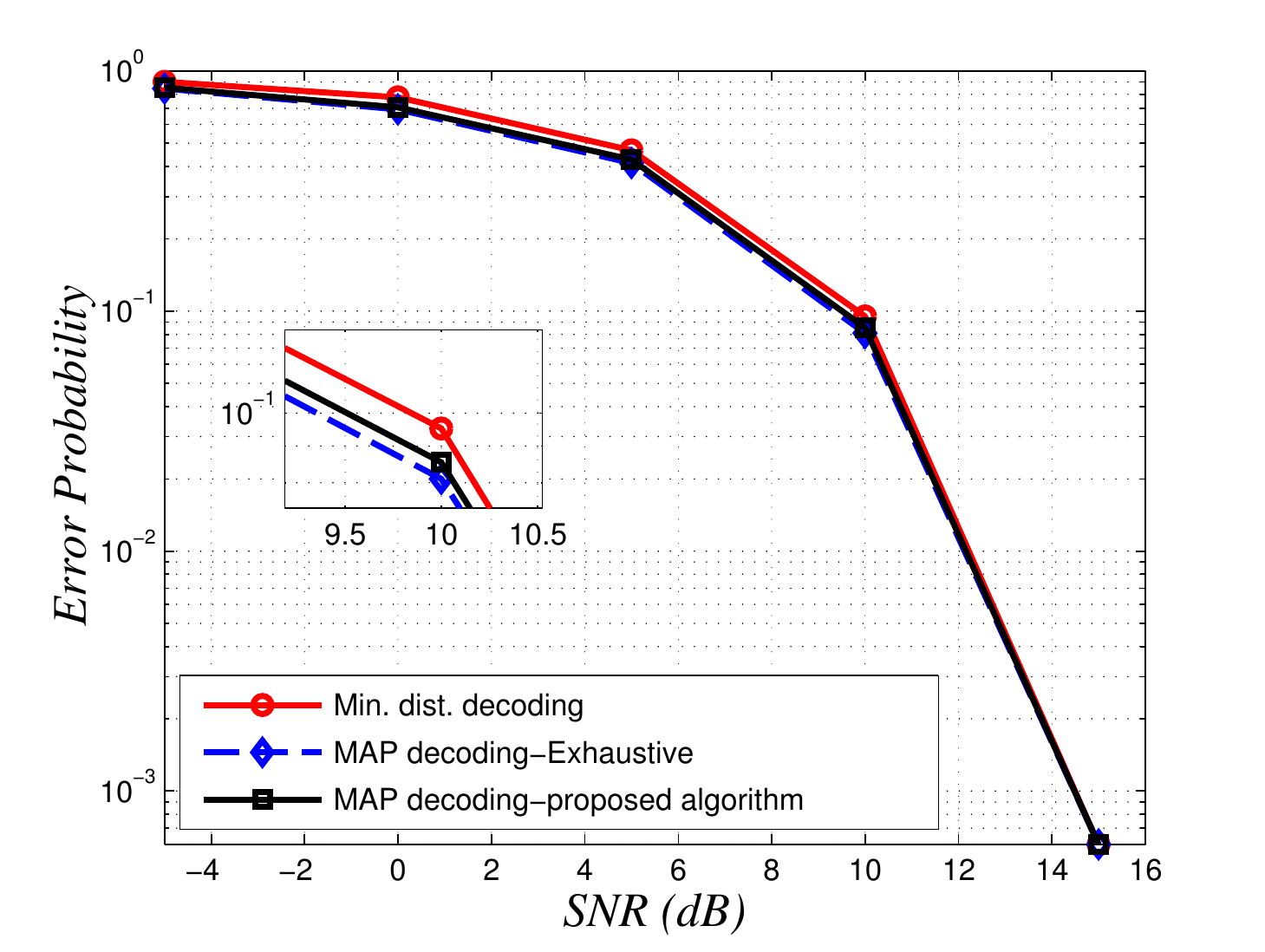}
   \caption{ Error performance for the case $n=2,N=2,P=21$.}\label{perf}
  \end{figure}
\emph{Example 1: 2-Dimensional lattice $(n=2)$} for this example we consider the same nested lattice code used to get the statistical distributions plotted in Fig.\ref{nested} for $N=2$ and $N=5$. The shaping constraint in this case is given by $P=\sigma_{\mathbf{x}}^{2}=6.5$. Given the number of sources and the power constraint imposed by the coarse lattice, we calculate for each case the bounds requirements to be considered in the decoding process. Numerical results concerning the case $N=2$, depicted in Fig.\ref{perf}, show that our proposed algorithm achieves almost identical performance as the exhaustive search, which confirms the effectiveness of our metric as well as the validity of the Gaussianity law assumption considered to model the sum-codewords even for the case of low number of sources $N=2$. Moreover, plotted curves show that the MAP decoder outperforms the conventional minimum distance decoding (Min. dist. decoding). The gain for this case is limited to $0.5$dB for an error probability equal to $10^{-1}$. 
Results for the case of $N=5$ plotted in Fig.\ref{perf11} confirm the previous findings and show that the performance gap between the MAP and the Minimum distance decoder is also not high. Common to these two settings is the high value of $N\sigma_{\mathbf{x}}^{2}$, which joins our analysis in the previous remark.
  \begin{figure}[h]
            \centering
           \includegraphics[height=9cm,width=11.5cm]{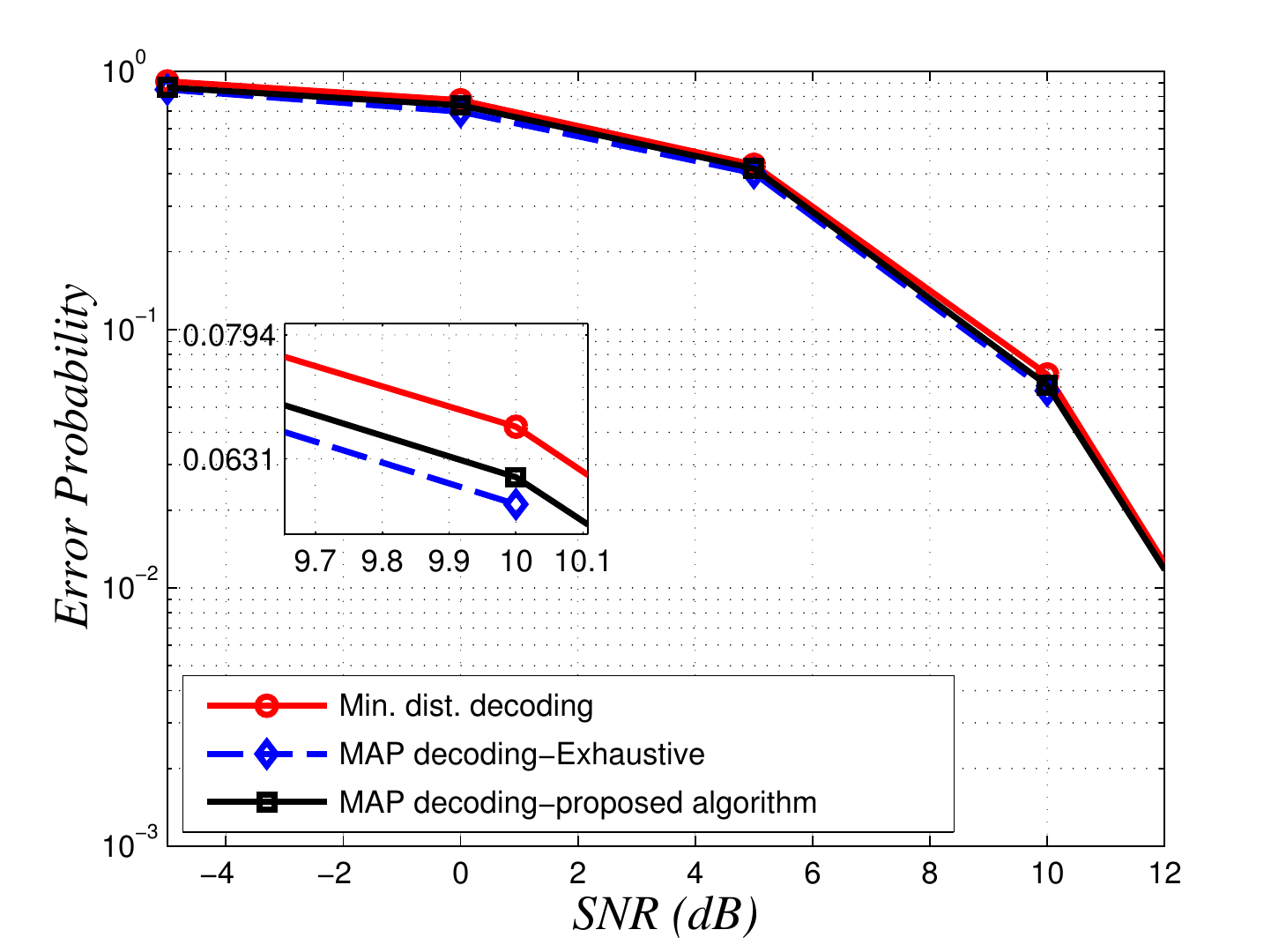}
         \caption{Error performance for $n=2,N=5,P=6.5$.}\label{perf11}
          \end{figure}
          \begin{figure}[h]
              \centering
              \includegraphics[height=9cm,width=11.5cm]{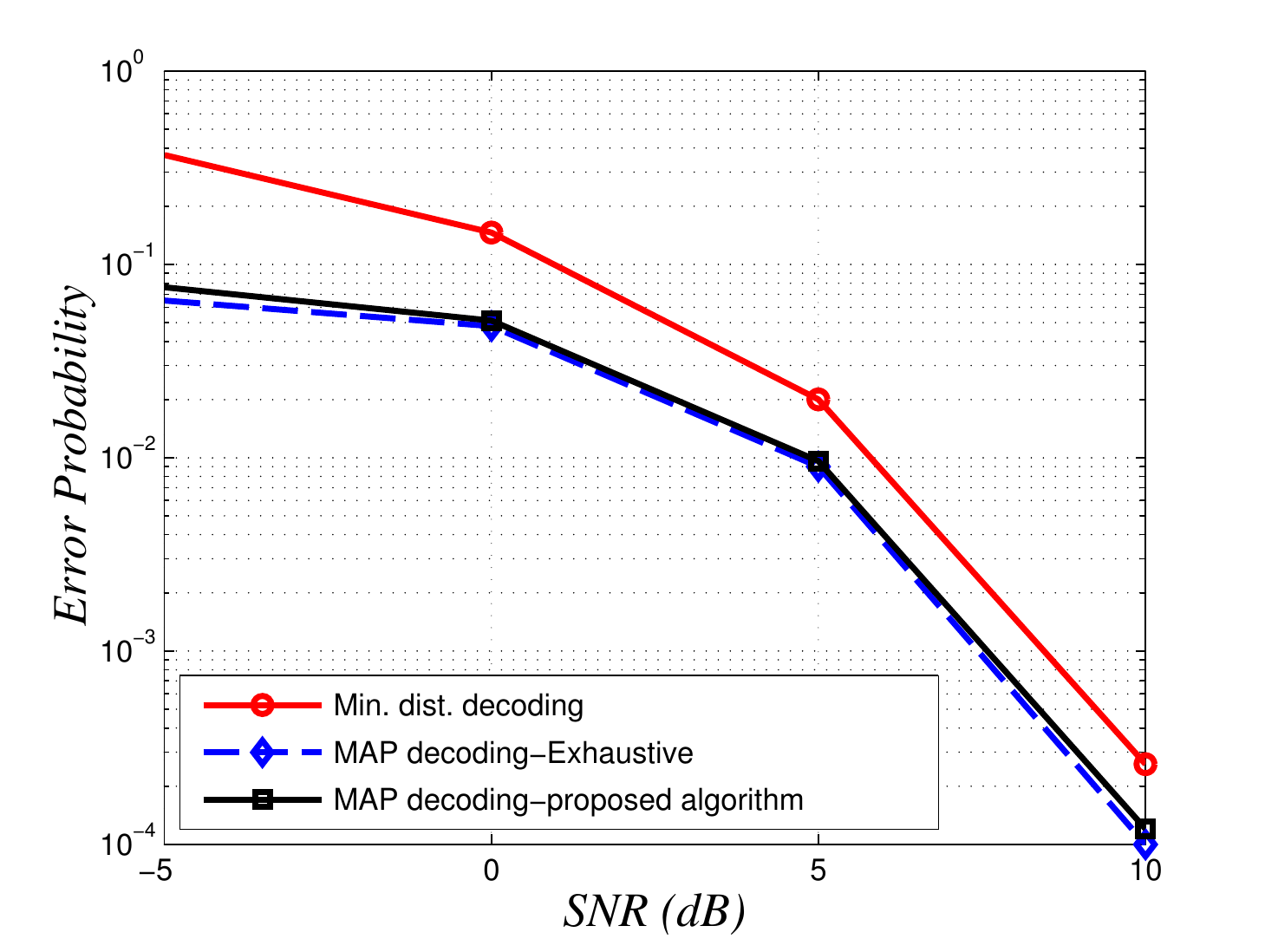}
              \caption{Error performance for $n=4,N=2,P=1$.}\label{perf2map}
              \end{figure}\\
\emph{Example 2: 4-Dimensional lattice $(n=4)$}
In this example we have consider the integer Fine lattice $\Lambda_{\mathrm{F}}$ of a generator matrix the identity $\mathbf{I}_{4}$ together with a cubic shaping region according to $P=1$. The aim of considering this example is to analyze the performance of the MAP decoder when the lattice dimension increases. Simulation results depicted in Fig.\ref{perf2map} show that our proposed MAP algorithm allows to achieve a gain of $1$dB at a codeword error rate of $10^{-3}$ over the minimum distance decoder while keeping a small gap to the exhaustive search. This case shows the merit of applying the MAP decoding in settings where the product $N\sigma_{\mathbf{x}}^{2}$ is small. In addition, we notice that the gap between the MAP decoder and the conventional one is independent of the lattice dimension, it rather increases in settings involving small $N\sigma_{\mathbf{x}}^{2}$.

\section{Conclusion}
This work was dedicated to decoding aspects for the Compute-and-Forward protocol in the basic multiple access real-valued channel. In a first part we studied the fading channel case assuming integer-valued lattices. We analyzed the $n-$dimensional case and proposed a novel near-ML decoder based on diophantine approximation. Numerical results for the $1-$D scenario show the gain of this method over the conventional CF decoder at high SNR range. Moreover, we addressed the Gaussian channel case. After analyzing the MAP decoding rule, we derived a novel decoding metric and developed practical algorithms based on lattice spherical decoding showed to outperform the standard minimum distance decoder. In future works, we aim to investigate the information theoretic performance of our MAP decoder to evaluate the achievable rate in the two-way Gaussian relay channel.

%\vspace{-2cm}

\appendices
\section{Appendix: MMSE-GDFE preprocessing filters}
\label{appendix:mmsegdfe}
We aim to show that the matrices $\mathbf{F}$ and $\mathbf{B}$ in the equivalent MAP decoding metric correspond respectively to the forward and backward filters of the MMSE-GDFE preprocessing in the channel $\mathbf{y}= \lambda_{\mathrm{s}}+\mathbf{z}$ with input $\lambda_{\mathrm{s}}$ such that $\frac{1}{n} \mathbb{E} \left( \parallel \lambda_{\mathrm{s}} \parallel ^{2} \right)= \sigma_{\mathrm{s}}^{2}$. Let $\mathbf{F}_{\mathrm{m}}$ and $\mathbf{B}_{\mathrm{m}}$ be the filters of the MMSE-GDFE preprocessing such that:
 $\mathbf{F}_{\mathrm{m}} \mathbf{y}= \mathbf{F}_{\mathrm{m}} \lambda_{\mathrm{s}} + \mathbf{F}_{\mathrm{m}} \mathbf{z}= \mathbf{B}_{\mathrm{m}} \lambda_{\mathbf{s}} + \left( \mathbf{F}_{\mathrm{m}} - \mathbf{B}_{\mathrm{m}} \right) \lambda_{\mathrm{s}} + \mathbf{F}_{\mathrm{m}} \mathbf{z}$. 
 Let $ \mathbf{w}= \left( \mathbf{F}_{\mathrm{m}} - \mathbf{B}_{\mathrm{m}} \right) \lambda_{\mathrm{s}} + \mathbf{F}_{\mathrm{m}} \mathbf{z}$ be the effective noise. The MMSE-GDFE filters correspond to the minimization of the variance $\varepsilon$ of the effective noise given by:
 \begin{align}
 \varepsilon &= \frac{1}{n} \mathbb{E} \left[ \mathbf{w}^{t}\mathbf{w} \right] = \frac{1}{n} \mathbb{E} \left[ \mathrm{tr} \left( \mathbf{w} \mathbf{w}^{t} \right)\right]= \frac{1}{n} \mathrm{tr} \left(  \mathbb{E} \left[ \left( \mathbf{F}_\mathrm{m}-\mathbf{B}_{\mathrm{m}} \right) \lambda_{\mathrm{s}} \lambda_{\mathrm{s}}^{t} \left( \mathbf{F}_{\mathrm{m}} - \mathbf{B}_{\mathrm{m}} \right)^{t} \right] + \mathbb{E} \left[ \mathbf{F}_{\mathrm{m}} \mathbf{z} \mathbf{z}^{t} \mathbf{F}_{\mathrm{m}}^{t} \right] \right) \notag \\
 &= \frac{1}{n} \mathrm{tr} \left( \left( \mathbf{F}_\mathrm{m}-\mathbf{B}_{\mathrm{m}} \right) \underbrace{\mathbb{E}\left[ \lambda_{\mathrm{s}} \lambda_{\mathrm{s}}^{t} \right]}_{\sigma_{\mathrm{s}}^{2}\mathbf{I}_n} \left( \mathbf{F}_{\mathrm{m}} - \mathbf{B}_{\mathrm{m}} \right)^{t} + \mathbf{F}_{\mathrm{m}} \underbrace{\mathbb{E}\left[\mathbf{z} \mathbf{z}^{t} \right]}_{\sigma^{2}\mathbf{I}_n} \mathbf{F}_{\mathrm{m}}^{t} \right) \notag \\
 &= \frac{\sigma_{\mathrm{s}}^{2}}{n} \mathrm{tr} \left( \mathbf{F}_{\mathrm{m}} \left( \mathbf{I}_n + \beta^{2}\mathbf{I}_n \right)\mathbf{F}_{\mathrm{m}}^{t}\mathbf{F}_{\mathrm{m}}\mathbf{B}_{\mathrm{m}}^{t}-\mathbf{B}_{\mathrm{m}}\mathbf{F}_{\mathrm{m}}^{t}+\mathbf{B}_{\mathrm{m}}\mathbf{B}_{\mathrm{m}}^{t}  \right) \notag
 \end{align} 
Let the matrix $\mathbf{T}$ such that $\mathbf{T}\mathbf{T}^{t}=(1+\beta^{2})\mathbf{I}_n$ and $\mathbf{G}$ such that $\mathbf{G}=\mathbf{F}_{\mathrm{m}}\mathbf{T}$, then $\varepsilon$ is equal to:
\begin{align}
\varepsilon &= \frac{\sigma_{\mathrm{s}}^{2}}{n} \mathrm{tr} \left( \left( \mathbf{G}-\mathbf{B}_{\mathrm{m}}\mathbf{T}^{-t} \right)\left( \mathbf{G}^{t}-\mathbf{T}^{-1}\mathbf{B}_{\mathrm{m}}^{t} \right) + \mathbf{B}_{\mathrm{m}} \left( \mathbf{I}_n - \left( \mathbf{T} \mathbf{T}^{t} \right)^{-1} \right)\mathbf{B}_{\mathrm{m}}^{t} \right) \notag \\
&= \frac{\sigma_{\mathrm{s}}^{2}}{n} \mathrm{tr} \left( \left( \mathbf{G}-\mathbf{B}_{\mathrm{m}}\mathbf{T}^{-t} \right)\left( \mathbf{G}^{t}-\mathbf{T}^{-1}\mathbf{B}_{\mathrm{m}}^{t} \right) +\frac{\beta^{2}}{1+\beta^{2}} \mathbf{B}_{\mathrm{m}}\mathbf{B}_{\mathrm{m}}^{t} \right) \notag
\end{align}
For fixed $\mathbf{B}_{\mathrm{m}}$ we seek first the optimal forward matrix $\mathbf{F}_{\mathrm{m}}$ which minimizes $\varepsilon$. This corresponds to have $\mathbf{G}=\mathbf{B}\mathbf{T}^{-t}$ which results in: $\mathbf{F}_{\mathrm{m}}=\frac{1}{1+\beta^{2}}\mathbf{B}_{\mathrm{m}}$. We get:
\begin{align}
\varepsilon_{\mathrm{min}} &= \frac{\sigma_{\mathrm{s}}^{2}}{n} \frac{\beta^{2}}{1+\beta^{2}}\mathrm{tr} \left( \mathbf{B}_{\mathrm{m}}\mathbf{B}_{\mathrm{m}}^{t} \right) = \frac{\sigma_{\mathrm{s}}^{2}}{n} \frac{\beta^{2}}{1+\beta^{2}}\mathrm{tr} \left( \mathbf{B}_{\mathrm{m}}^{t}\mathbf{B}_{\mathrm{m}} \right)
\end{align}
We have $\mathbf{B}_{\mathrm{m}}^{t}\mathbf{B}_{\mathrm{m}}=\left(1+\beta^{2} \right)\mathbf{I}_n$ wich leads to the minimum variance: $\varepsilon_{\mathrm{min}}= \sigma_{\mathrm{s}}^{2}\beta^{2}$. Now, we will show that $\mathbf{F}=\mathbf{F}_\mathrm{m}$ and $\mathbf{B}=\mathbf{B}_{\mathrm{m}}$. First, $\mathbf{F}$ and $\mathbf{B}$ satisfy same constraints as the MMSE-GDFE filters. The last piece to prove the equivalence then is to prove that $\mathbf{F}$ and $\mathbf{B}$ allow to minimize the variance of the effective noise $\mathbf{w}$. We compute the corresponding variance refered to $\varepsilon_{eq}$:
\begin{align}
 \varepsilon_{eq} &= \frac{\sigma_{\mathrm{s}}^{2}}{n} \mathrm{tr} \left( \left( \mathbf{F} - \mathbf{B} \right)\left( \mathbf{F} - \mathbf{B} \right)^{t} +  \beta^{2} \mathbf{F} \mathbf{F} \right)= \frac{\sigma_{\mathrm{s}}^{2}}{n} \mathrm{tr} \left( \left(1+\beta^{2} \right) \mathbf{F}\mathbf{F}^{t}-\mathbf{F}\mathbf{B}^{t}-\mathbf{B}\mathbf{F}^{t}+\mathbf{B}\mathbf{B}^{t} \right) \notag \\
 &\stackrel{(a)}{=} \frac{\sigma_{\mathrm{s}}^{2}}{n} \left( (1+\beta^{2})\mathrm{tr}\left(\mathbf{F}\mathbf{F}^{t} \right) - \mathrm{tr}\left(\mathbf{F}\mathbf{B}^{t} \right)- \mathrm{tr}\left(\mathbf{B}\mathbf{F}^{t} \right)+ \mathrm{tr}\left( \mathbf{B} \mathbf{B}^{t} \right) \right) \notag \\
  &\stackrel{(b)}{=} \frac{\sigma_{\mathrm{s}}^{2}}{n} \left( (1+\beta^{2})\mathrm{tr}\left(\mathbf{F}^{t} \mathbf{F} \right) - \mathrm{tr}\left(\mathbf{B}^{t} \mathbf{F} \right)- \mathrm{tr}\left(\mathbf{F}^{t} \mathbf{B} \right)+ \mathrm{tr}\left(\mathbf{B}^{t} \mathbf{B}  \right) \right) \notag \\
 &\stackrel{(c)}{=} \frac{\sigma_{\mathrm{s}}^{2}}{n} \left(  (1+\beta^{2})\mathrm{tr}\left(\mathbf{F}^{t} \mathbf{F} \right)-2\underbrace{\mathrm{tr}\left(\mathbf{F}^{t} \mathbf{B} \right)}_{n}+ \underbrace{\mathrm{tr}\left(\mathbf{B}^{t} \mathbf{B}  \right)}_{(1+\beta^{2})n} \right)= \frac{\sigma_{\mathrm{s}}^{2}}{n} \left( (1+\beta^{2})\mathrm{tr}\left(\mathbf{F}^{t} \mathbf{F} \right)+(\beta^{2}-1)n \right) \notag
 \end{align} 
where (a) follows from linearity of trace, (b) follows from commutativity of trace of matrices $(\mathrm{tr}(\mathbf{AB})=\mathrm{tr}(\mathbf{BA}))$, (c) follows using $\mathrm{tr}(\mathbf{A})=\mathrm{tr}(\mathbf{A}^{t})$. Finally, we use the relation $\mathbf{F}^{t}\mathbf{B}=\mathbf{I}_n$ to deduce that $\mathbf{F}^{t}\mathbf{F}=\left(\mathbf{B}^{t}\mathbf{B} \right)^{-1}$ which gives $\mathrm{tr}\left(\mathbf{F}^{t} \mathbf{F} \right)=\frac{n}{1+\beta^{2}}$. We get then: $\varepsilon_{eq} = \sigma_{\mathrm{s}}^{2} \beta^{2}=\varepsilon_{\mathrm{min}}$.

\nocite{*}   
\small{
\bibliographystyle{unsrt}  
\bibliography{bibliography}}

\begin{thebibliography}{10}

\bibitem{Nazer08}
B.~Nazer and M.~Gastpar.
\newblock Compute-and-forward: Harnessing interference with structured codes.
\newblock In {\em Proceedings of ISIT}, pages 772 --776, July 2008.

\bibitem{Silva_isit10}
C.~Feng, D.~Silva, and F.R. Kschischang.
\newblock An algebraic approach to physical-layer network coding.
\newblock In {\em Proceedings of ISIT}, pages 1017 --1021, June 2010.

\bibitem{Asma1}
A.~Mejri and G.~Rekaya.
\newblock Practical physical layer network coding in multi-sources relay
  channels via the compute-and-forward.
\newblock In {\em Proceedings of WCNC}, pages 166--171, April 2013.

\bibitem{Asma2}
A.~Mejri and G.~Rekaya.
\newblock Bidirectional relaying via network coding: Design algorithm and
  performance evaluation.
\newblock In {\em Proceedings of ICT}, pages 1--5, May 2013.

\bibitem{AsmaGhaya}
A.~Mejri, G.~Rekaya, and J.~C Belfiore.
\newblock Lattice decoding for the compute-and-forward protocol.
\newblock In {\em Proceedings of International Conference on Communications and
  Networking}, pages 1--8, March 2012.

\bibitem{Belfiore12}
J-C. Belfiore and C.~Ling.
\newblock The flatness factor in lattice network coding: Design criterion and
  decoding algorithm.
\newblock In {\em International Zurich Seminar on Communications}, 2012.

\bibitem{Viterbo99}
E.~Viterbo and J.~Boutros.
\newblock A universal lattice code decoder for fading channels.
\newblock {\em IEEE Transactions on Information Theory}, 45(5):1639--1642,
  1999.

\bibitem{Cohen93}
H.~Cohen.
\newblock {\em A course in Computational Algebraic Number Theory}.
\newblock Springer-Verlag, New York, USA, 1993.

\bibitem{Lazebnik96}
F.~Lazebnik.
\newblock {\em On Systems of Linear Diophantine Equations}, volume~69.
\newblock Mathematics Magazine, 1996.

\bibitem{Cormen09}
T.H. Cormen, C.~E. Leiserson, R.~L. Rivest, and C.~Stein.
\newblock {\em Introduction to Algorithms}.
\newblock The MIT Press, 2009.

\bibitem{Clarkson97}
I.V.L. Clarkson.
\newblock {\em Approximation of Linear Forms by Lattice Points with
  Applications to Signal Processing}.
\newblock Ph.D Thesis dissertation, 1997.

\bibitem{Cassel57}
J.W.S. Cassel.
\newblock {\em An Introduction to Diophantine Approximation}.
\newblock Cambridge University Press, 1957.

\bibitem{Behnamfar03}
F.~Behnamfar, F.~Alajaji, and T.~Linder.
\newblock Performance analysis of map decoded space-time orthogonal block codes
  for non-uniform sources.
\newblock In {\em Proceedings of the IEEE Information Theory Workshop}, pages
  46--49, 2003.

\bibitem{ErezZamir}
U.~Erez, S.~Litsyn, and R.~Zamir.
\newblock Lattices which are good for (almost) everything.
\newblock In {\em Proceedings of ITW}, pages 271 -- 274, March 2003.

\bibitem{Forney00}
G.D. Forney, M.D. Trott, and S-Y. Chung.
\newblock Sphere-bound-achieving coset codes and multilevel coset codes.
\newblock {\em IEE Trans. on IT}, 46(3):820--850, 2000.

\bibitem{Banaszczyk93}
W.~Banaszczyk.
\newblock New bounds in some transference theorems in the geometry of numbers.
\newblock {\em Math. Ann.}, 296:625--635, 1993.

\bibitem{Micciancio04}
D.~Micciancio and O.~Regev.
\newblock Worst-case to average-case reductions based on gaussian measure.
\newblock In {\em Proceedings of the 45rd annual symposium on foundations of
  computer science}, pages 371--381, Italy, 2004.

\end{thebibliography}

\end{document}